\newtheorem{theorem}{Theorem}
\newtheorem{corollary}[theorem]{Corollary}
\newtheorem{proposition}[theorem]{Proposition}
\newtheorem{remark}[theorem]{Remark}
\newenvironment{proof}[1][Proof]{\textbf{#1.} }{\ \rule{0.5em}{0.5em}}
\newcommand*{\fd}
[2]{\mathchoice{\frac{\delta#1}{\delta#2}}
  {\delta #1/\delta#2}{\delta#1/\delta#2}{\delta#1/\delta#2}}
\newcommand{\ddx}[1]{\partial_x^{#1}}
\begin{document}

\title{\textbf{On the bi-Hamiltonian Geometry\\
of WDVV Equations}}
\author{Maxim V. Pavlov$^{1,2}$, Raffaele F. Vitolo$^{3}$ \\
$^{1}$Department of Mathematical Physics,\\
Lebedev Physical Institute of Russian Academy of Sciences,\\
Leninskij Prospekt, 53, 119991 Moscow, Russia\\
\texttt{m.v.pavlov@lboro.ac.uk}\\
[3mm] $^{2}$Department of Applied Mathematics,\\
National Research Nuclear University MEPHI,\\
Kashirskoe Shosse 31, 115409 Moscow, Russia\\
[3mm] $^{3}$Department of Mathematics and Physics \textquotedblleft E. De
Giorgi\textquotedblright ,\\
University of Salento, Lecce, Italy\\
\texttt{raffaele.vitolo@unisalento.it} }
\date{}
\maketitle

\begin{abstract}
We consider the WDVV associativity equations in the four dimensional case.
These nonlinear equations of third order can be written as a pair of six
component commuting two-dimensional non-diagonalizable hydrodynamic type
systems. We prove that these systems possess a compatible pair of local
homogeneous Hamiltonian structures of Dubrovin--Novikov type (of first and
third order, respectively).

\bigskip

\noindent MSC: 37K05, 37K10, 37K20, 37K25.

\bigskip

\noindent Keywords: Hamiltonian operator, Jacobi identity, Monge metric,
hydrodynamic type system, WDVV equations, Casimirs.
\end{abstract}


\section{Introduction}

The Witten--Dijkgraaf--Verlinde--Verlinde (WDVV) associativity equations arise
as the conditions of associativity of an algebra in an $N$ dimensional
space. The mathematical theory of the equations can be found in \cite{Dub2b},
as well as references to papers with the physical motivations. The equations
are a system of third-order PDEs in one unknown function $%
F=F(t^{1},\dots ,t^{N})$. Namely, it is assumed that
\begin{equation*}
\eta _{\alpha \beta }=\frac{\partial ^{3}F}{\partial t^{1}\partial t^{\alpha
}\partial t^{\beta }}
\end{equation*}%
is a constant nondegenerate symmetric matrix ($\eta ^{\alpha \beta }$ will
denote its inverse matrix); the WDVV associativity equations are equivalent
to the requirement that the functions%
\begin{equation*}
c_{\beta \gamma }^{\alpha }=\eta ^{\alpha \mu }\frac{\partial ^{3}F}{%
\partial t^{\mu }\partial t^{\beta }\partial t^{\gamma }}
\end{equation*}%
are the structure constants of an associative algebra. Then the
associativity condition reads as%
\begin{equation*}
\eta ^{\mu \lambda }\frac{\partial ^{3}F}{\partial t^{\lambda }\partial
t^{\alpha }\partial t^{\beta }}\frac{\partial ^{3}F}{\partial t^{\nu
}\partial t^{\mu }\partial t^{\gamma }}=\eta ^{\mu \lambda }\frac{\partial
^{3}F}{\partial t^{\nu }\partial t^{\alpha }\partial t^{\mu }}\frac{\partial
^{3}F}{\partial t^{\lambda }\partial t^{\beta }\partial t^{\gamma }}
\end{equation*}

The integrability of the above equations was proved in \cite{Dub2b} by giving a
Lax pair for all values of $N$ and $\eta _{\alpha \beta }$. The Hamiltonian
geometry of WDVV associativity equations also attracted interest of a number of
researchers. In particular, a fundamental contribution was given in papers
\cite{FM,FGMN}, where the case $N=3$ was considered. When $N=3$ we have the
first nontrivial case with just one WDVV associativity equation. If the matrix
$\mathbf{\eta }$ is antidiagonal, \emph{i.e.} $\eta _{\alpha \beta }=\delta
_{\alpha +\beta ,4}$ and $F=\frac{1}{2}%
(t^{1})^{2}t^{3}+\frac{1}{2}t^{1}(t^{2})^{2}+f(t^{2},t^{3})$, the WDVV
associativity equation is (after setting $x=t^{2}$, $t=t^{3}$)
\begin{equation}
f_{ttt}=f_{xxt}^{2}-f_{xxx}f_{xtt}.  \label{eq:4}
\end{equation}%
Introducing the new variables $a^{1}=a=f_{xxx}$, $%
a^{2}=b=f_{xxt} $, $a^{3}=c=f_{xtt}$, the WDVV associativity equation can be
written as an hydrodynamic type system of PDEs
\begin{equation}
a_{t}^{i}=v_{j}^{i}(\mathbf{a})a_{x}^{j};  \label{eq:5}
\end{equation}%
more precisely, $a_{t}=b_{x}$, $b_{t}=c_{x}$, $c_{t}=(b^{2}-ac)_{x}$. It was
proved in \cite{FGMN} that the above system can be rewritten as a
bi-Hamiltonian system 
\begin{equation}
a_{t}^{i}=A_{1}^{ij}\fd{H_2}{a^j}=A_{2}^{ij}\fd{H_{1}}{a^j}.  \label{eq:6}
\end{equation}%
with respect to two compatible local Hamiltonian operators $\hat{A}_{1}$ and 
$\hat{A}_{2}$, with expressions 
\begin{gather*}
\hat{A}_{1}=%
\begin{pmatrix}
-\frac{3}{2}\partial _{x}^{{}} & \frac{1}{2}\partial _{x}^{{}}a & \partial
_{x}^{{}}b \\ 
\frac{1}{2}a\partial _{x}^{{}} & \frac{1}{2}(\partial _{x}^{{}}b+b\partial
_{x}^{{}}) & \frac{3}{2}c\partial _{x}^{{}}+c_{x} \\ 
b\partial _{x}^{{}} & \frac{3}{2}\partial _{x}^{{}}c-c_{x} & 
(b^{2}-ac)\partial _{x}^{{}}+\partial _{x}^{{}}(b^{2}-ac)%
\end{pmatrix}
\\
\hat{A}_{2}=%
\begin{pmatrix}
0 & 0 & \partial _{x}^{3} \\ 
0 & \partial _{x}^{3} & -\partial _{x}^{2}a\partial _{x} \\ 
\partial _{x}^{3} & -\partial _{x}a\partial _{x}^{2} & \partial
_{x}^{2}b\partial _{x}+\partial _{x}b\partial _{x}^{2}+\partial
_{x}a\partial _{x}a\partial _{x}%
\end{pmatrix}%
\end{gather*}%
and Hamiltonian densities, respectively, $h_{2}=c$, $h_{1}=-\frac{1}{2}%
a(\partial _{x}^{-1}b)^{2}-(\partial _{x}^{-1}b)(\partial _{x}^{-1}c)$,
where $H_{i}=\int h_{i}dx$. Here by a `Hamiltonian operator' we mean an
operator $\hat{A}$ such that its Schouten bracket $[\hat{A},\hat{A}]$
vanishes, and by `compatible' (or commuting) operators we mean that the
Schouten bracket $[\hat{A}_{1},\hat{A}_{2}]$ vanishes (see \cite{Dorf} for
the definition).

The two Hamiltonian operators $\hat{A}_{1}$ and $\hat{A}_{2}$ are members of
a class of operators which has been introduced by Dubrovin and Novikov \cite%
{DN,DN2}. They are homogeneous with respect to the grading $\deg a^{i}=0$, $%
\deg \partial _{x}^{{}}=1$. They have interesting geometric properties which
were completely described in \cite{DN} for first order Hamiltonian
operators. Third order Hamiltonian operators have a more complicated
structure (see \cite{GP87,GP97,GP91,BP,Doyle}). A complete
classification of these Hamiltonian operators was found just in two and
three component cases (see detail in \cite{fpv}).

In particular, $\hat{A}_{1}$ was found in \cite{FM}, and it is
completely specified by a contravariant flat pseudo-Riemannian metric $%
g^{ik}$. The observation that led to finding $\hat{A}_{1}$ was that the
eigenvalues $u^{k}(\mathbf{a})$ of one of the matrices of the Lax pair of
the system \eqref{eq:5} are conservation law densities. If the system is
rewritten using the above eigenvalues as new dependent variables $u^{k}$,
the Hamiltonian operator $\hat{A}_{1}$ becomes evident and is of the type $%
A_{1}^{ij}=K^{ij}\partial _{x}^{{}}$, where $K^{ij}$ is a constant symmetric
non-degenerate matrix. Hamiltonian operators of this type are said to be 
\emph{hydrodynamic type Hamiltonian operators}.

The Hamiltonian operator $\hat{A}_{2}$ was found in a completely different
way. More precisely, a Lagrangian for the $x$-derivative of the WDVV
associativity equation \eqref{eq:4} was found, and a symplectic
representation of this equation was achieved in \cite{FGMN}. Then $\hat{A}%
_{2}$ was found by inverting the corresponding symplectic form and
multiplying it by $\hat{A}_{1}$. It is necessary to emphasize that the
coordinates $a^{k}$ (see \eqref{eq:5}) are Casimirs densities for $\hat{A}%
_{2}$, \emph{i.e.} $\hat{A}_{2}$ has vanishing free term in these
coordinates. It is known \cite{GP87,GP97,GP91} that $\hat{A}_{2}$ has a
particularly simple form \eqref{casimir} when written with respect to its
Casimirs densities; in these coordinates $a^{k}$ the inverse matrix of the
leading coefficient $g^{ik}$ is a Monge metric \cite{fpv}, which in this
example reads as 
\begin{equation*}
g_{ij}=%
\begin{pmatrix}
-2b & a & 1 \\ 
a & 1 & 0 \\ 
1 & 0 & 0%
\end{pmatrix}%
\end{equation*}

Other approaches to the Hamiltonian geometry of WDVV equations appeared in the
literature in the case $N=3$. For instance, in \cite{KN1} a different
identification of the variables $t^{2}$ and $t^{3}$ as $t$ and $x$ leads to
different WDVV associativity equations and different bi-Hamiltonian
formulations through local Dubrovin--Novikov operators of the first and third
order. Moreover, in \cite{KN2} another choice of constants in $\eta _{\alpha
  \beta }$ was investigated. Its third order Hamiltonian operator lies in a
different class (cf. \cite{FGMN}) with respect to the classification in
\cite{fpv}. Finally, in \cite{KeKrVeVi-TMF-2010} Hamiltonian operators for the
WDVV equation in the non-evolutionary form~\eqref{eq:4} have been found
(although with an explicit dependency on the independent variables).

In the case $N=4$ the bi-Hamiltonian nature of WDVV associativity equations was
an open problem. If the matrix $\mathbf{\eta }$ is antidiagonal\footnote{%
  The constant symmetric matrix $\mathbf{\eta}$ can always be reduced by a
  linear change of the coordinates $(t^i)$ to either the antidiagonal form if
  $\eta_{11}=0$, or to another form if $\eta_{11}\neq 0$. Only the first case
  admits physically relevant examples.  See \cite{Dub2b} for details.},
\emph{i.e.} the choice $\eta _{\alpha \beta }=\delta _{\alpha +\beta ,5}$, WDVV
associativity equations were considered in \cite{Dub2b}. This implies $F=
\frac{1}{2}(t^{1})^{2}t^{4}+t^{1}t^{2}t^{3}+f(t^{2},t^{3},t^{4})$. Then we have
the following WDVV associativity equations (after the identifications $%
x=t^{2}$, $y=t^{3}$, $z=t^{4}$)
\begin{equation}
\begin{split}
& -2f_{xyz}-f_{xyy}f_{xxy}+f_{yyy}f_{xxx}=0, \\
& -f_{xzz}-f_{xyy}f_{xxz}+f_{yyz}f_{xxx}=0, \\
& -2f_{xyz}f_{xxz}+f_{xzz}f_{xxy}+f_{yzz}f_{xxx}=0, \\
& -f_{yyy}f_{xxz}+f_{yzz}+f_{yyz}f_{xxy}=0, \\
& f_{zzz}-(f_{xyz})^{2}+f_{xzz}f_{xyy}-f_{yyz}f_{xxz}+f_{yzz}f_{xxy}=0, \\
& f_{yyy}f_{xzz}-2f_{yyz}f_{xyz}+f_{yzz}f_{xyy}=0.
\end{split}
\label{eq:9}
\end{equation}

In \cite{FM6} (see also \cite{OM98}) it was found that the above system can
be rewritten as a compatible pair of two hydrodynamic type systems (see
Section \ref{sec:wdvv-as-commuting}). Such systems admit a Dubrovin--Novikov
type first-order Hamiltonian operator \cite{FM6}, but a second
Dubrovin--Novikov type third order Hamiltonian operator for these systems
was never found until now (see basic facts in \ref{sec:struct-second-hamilt}%
).

The main task of this paper is to completely uncover the bi-Hamiltonian
geometry of the WDVV associativity equations for $N=4$ and the antidiagonal
metric $\eta _{\alpha \beta }=\delta _{\alpha +\beta ,5}$ by finding a
second Dubrovin--Novikov type third order Hamiltonian operator for the
corresponding hydrodynamic-type systems. In order to achieve this goal, we
develop a procedure that can be fruitfully used to find such Hamiltonian
operators for other WDVV associativity systems, for example for different
matrices $\mathbf{\eta }$ or different values of $N$.

Below, we outline our procedure for convenience of the reader.
\begin{enumerate}
\item We start from the following data: the Lax pair and the first-order
  Hamiltonian structure $\hat{A}_1$.
\item In flat coordinates $(\mathbf{u})$ of $\hat{A}_1$, using the Lax pair, it
  is possible to find a sequence of homogeneous conservation law densities in
  the most compact form.
\item If we suppose that the above densities are related by a bi-Hamiltonian
  recursion with an unknown third-order operator $\hat{A}_2$ of
  Dubrovin--Novikov type, we can find a candidate to be the metric
  $g_{ij}(\mathbf{u})$ that is the coefficient of $\ddx{3}$ (this is the
  content of Theorem~\ref{th:reconstruction}).
\item We make the hypothesis that the Casimirs of $\hat{A}_2$ are the original
  coordinates $(\mathbf{a})$ in which the hydrodynamic-type WDVV systems are
  written. Indeed, this is the case: if we change coordinates from
  $(\mathbf{u})$ to $(\mathbf{a})$ we can prove that $g_{ij}(\mathbf{a})$ is a
  Monge metric and completely determines a Hamiltonian operator $\hat{A}_2$
  which is compatible with $\hat{A}_1$, according with the general theory in
  \cite{fpv} (Theorem~\ref{th:second-hamilt-struct}).
\item  In the same Theorem we also exhibit a factorization of the
Hamiltonian operator whose theoretical existence was ensured in \cite{Doyle}.
\item A deeper analysis of the properties of the factorization of the
  third-order Hamiltonian operator of Dubrovin--Novikov type performed in
  Section~ \ref{sec:potemin} applied to the WDVV hydrodynamic type systems
  allows us to formulate general existence theorems for nonlocal Casimirs of
  the Dubrovin--Novikov type third-order Hamiltonian operator, Hamiltonians for
  the WDVV hydrodynamic type systems and momentum, thus completing a
  bi-Hamiltonian picture of the WDVV hydrodynamic type systems in six
  components.
\end{enumerate}

The above procedure can be fruitfully used not only for WDVV
associativity equations but for any system of PDEs in $1+1$ dimensions.

We stress that the reconstruction of the Dubrovin--Novikov type third-order
Hamiltonian operator for the six component systems involves several steps
where coordinate expressions of the objects involved can only be handled by
computer algebra systems, and too big to be written down in this paper,
However, the final result, thanks to the factorized form that we achieve in
Theorem~\ref{th:second-hamilt-struct}, is so compact that Hamiltonian
properties of the WDVV hydrodynamic type systems can be checked by pen and
paper.

In particular, almost all symbolic computations were performed by CDE
\cite{cde}, a REDUCE package for integrability of PDEs. Only the
linearization of the WDVV hydrodynamic type systems and
their formal adjoint have been done with Jets \cite{Jets}, a Maple package
for the geometry of PDEs, since this feature will only appear
in the forthcoming version of CDE. The computation of homogeneous conservation
law densities required about 6 hours and 18GB of RAM on the server
\texttt{sophus} of the Dipartimento di Matematica e Fisica \textquotedblleft
E. De Giorgi\textquotedblright\ of the Universit\`{a} del Salento.

We stress that despite the fact that we heavily relied on computer algebra
calculations, our main results are checkable by pen and paper, in particular
the Hamiltonian property (with respect to the newly found third order operator
$\hat{A}_2$) of the $6$-component WDVV hydrodynamic type systems; see
Remark~\ref{sec:second-hamilt-struct-1}.

At the end of the paper a conclusive section contains several interesting
remarks about the perspectives of mathematical research on the bi-Hamiltonian
geometry of WDVV associativity equations. Here we would like to stress that at
this point it is natural to conjecture that every WDVV system admits a
third-order local Hamiltonian operator of Dubrovin--Novikov type, with a
distinguished subset of WDVV systems admitting a bi-Hamiltonian
formulation. Note that a Dubrovin--Novikov type third order Hamiltonian
operator is completely specified by an object from projective geometry, a
quadratic line complex \cite{Dolgachev,fpv}, which could be of interest in the
rich geometric framework that surrounds WDVV associativity equations.

\section{The WDVV Associativity\ Equations in $4$ Dimensions}

\label{sec:wdvv-as-commuting}

In this section we will give a brief summary of what is known from the
previous investigations in \cite{FM6} (see also \cite{OM98}).

For general $N$ the WDVV associativity equations can be presented as $N-2$
commuting two-dimensional non-diagonalizable hydrodynamic type systems with $%
n=N(N-1)/2$ components. The procedure for obtaining such systems is shown
here in details in the case $N=4$ \eqref{eq:9}. We call them WDVV
hydrodynamic type systems here and below.

We introduce new field variables $a^{k}$ in correspondence with every
derivative $f_{t^{i}t^{j}t^{k}}$ which contains at least one instance of $%
x=t^{2}$, i.e. $%
a^{1}=f_{xxx},a^{2}=f_{xxy},a^{3}=f_{xxz},a^{4}=f_{xyy},a^{5}=f_{xyz},a^{6}=f_{xzz} 
$. Then, the compatibility conditions for the WDVV associativity equations %
\eqref{eq:9} can be written as a pair of hydrodynamic type systems (cf. %
\eqref{eq:5})%
\begin{equation*}
a_{y}^{i}=v_{j}^{i}(\mathbf{a})a_{x}^{j},\text{ \ }a_{z}^{i}=w_{j}^{i}(%
\mathbf{a})a_{x}^{j};
\end{equation*}%
more precisely 
\begin{equation}
a_{y}^{i}=(v^{i}(\mathbf{a}))_{x},\text{ \ }a_{z}^{i}=(w^{i}(\mathbf{a}%
))_{x},  \label{eq:10}
\end{equation}%
where%
\begin{equation*}
v^{1}=a^{2},\text{ \ }w^{1}=a^{3},\text{ \ }v^{2}=a^{4},\text{ \ }%
v^{3}=w^{2}=a^{5},\text{ \ }w^{3}=a^{6},\text{ \ }v^{4}=f_{yyy}=\frac{%
2a^{5}+a^{2}a^{4}}{a^{1}},\text{ \ }v^{5}=w^{4}=f_{yyz}=\frac{%
a^{3}a^{4}+a^{6}}{a^{1}},
\end{equation*}%
\begin{equation*}
v^{6}=w^{5}=f_{yzz}=\frac{2a^{3}a^{5}-a^{2}a^{6}}{a^{1}},\text{ \ \ }%
w^{6}=f_{zzz}=(a^{5})^{2}-a^{4}a^{6}+\frac{%
(a^{3})^{2}a^{4}+a^{3}a^{6}-2a^{2}a^{3}a^{5}+(a^{2})^{2}a^{6}}{a^{1}}.
\end{equation*}

The commuting hydrodynamic type systems \eqref{eq:10} can be expressed as
the compatibility condition of two Lax pairs with a common part \cite{FM6}.
Such a common part takes the form%
\begin{equation}
\begin{pmatrix}
\psi \\ 
\psi _{1} \\ 
\psi _{2} \\ 
\psi _{3}%
\end{pmatrix}%
_{x}=\lambda \mathbf{A}%
\begin{pmatrix}
\psi \\ 
\psi _{1} \\ 
\psi _{2} \\ 
\psi _{3}%
\end{pmatrix}%
=\lambda 
\begin{pmatrix}
0 & 1 & 0 & 0 \\ 
a^{3} & a^{2} & a^{1} & 0 \\ 
a^{5} & a^{4} & a^{2} & 1 \\ 
a^{6} & a^{5} & a^{3} & 0%
\end{pmatrix}%
\begin{pmatrix}
\psi \\ 
\psi _{1} \\ 
\psi _{2} \\ 
\psi _{3}%
\end{pmatrix}%
.  \label{eq:11}
\end{equation}%
The characteristic equation of $\mathbf{A}$ is 
\begin{equation}
\det (\mathbf{A}-\rho \mathbf{I})=\rho ^{4}-2a^{2}\rho
^{3}+[(a^{2})^{2}-2a^{3}-a^{1}a^{4}]\rho ^{2}+2(a^{2}a^{3}-a^{1}a^{5})\rho
+(a^{3})^{2}-a^{1}a^{6}=0.  \label{eq:12}
\end{equation}%
It can be proved that its roots $u^{1}$, $u^{2}$, $u^{3}$, $u^{4}$ are
conservation law densities of both systems \eqref{eq:10}. The relation
between these roots and the coefficients of the characteristic equation is
given by the Vi\`{e}te formulae

\begin{equation}
\begin{split}
a^{2}& =\frac{1}{2}(u^{1}+u^{2}+u^{3}+u^{4}), \\
a^{3}& =\frac{1}{4}[(u^{1})^{2}+(u^{2})^{2}+(u^{3})^{2}+(u^{4})^{2}]-\frac{1%
}{8}(u^{1}+u^{2}+u^{3}+u^{4})^{2}-\frac{1}{2}a^{1}a^{4}, \\
a^{5}& =\frac{1}{2a^{1}}%
(2a^{2}a^{3}+u^{1}u^{2}u^{3}+u^{1}u^{2}u^{4}+u^{1}u^{3}u^{4}+u^{2}u^{3}u^{4}),
\\
a^{6}& =\frac{1}{a^{1}}[(a^{3})^{2}-u^{1}u^{2}u^{3}u^{4}].
\end{split}
\label{eq:13}
\end{equation}%
We can change the variables $a^{k}$ of the systems \eqref{eq:10} to the new
variables $u(\mathbf{a})$, i.e. $u^{0}=a^{1}$, $u^{1}$, $u^{2}$, $u^{3}$, $%
u^{4}$, $u^{5}=a^{4}$. These conservation law densities are flat coordinates
(see details in \cite{FM6}), i.e. hydrodynamic type systems \eqref{eq:10}
can be equipped by the Hamiltonian operator $\hat{A}_{1}=\mathbf{K}\partial
_{x}^{{}}$, where $\mathbf{K}$ is the constant symmetric nondegenerate
matrix 
\begin{equation}
K^{ij}=%
\begin{pmatrix}
0 & 0 & 0 & 0 & 0 & -2 \\ 
0 & 1 & -1 & -1 & -1 & 0 \\ 
0 & -1 & 1 & -1 & -1 & 0 \\ 
0 & -1 & -1 & 1 & -1 & 0 \\ 
0 & -1 & -1 & -1 & 1 & 0 \\ 
-2 & 0 & 0 & 0 & 0 & 0%
\end{pmatrix}%
,  \label{eq:14}
\end{equation}%
namely 
\begin{equation}
u_{y}^{i}=K^{ij}\partial _{x}^{{}}\frac{\delta \mathbf{H}_{7}}{\delta u^{j}},%
\text{ \ }u_{z}^{i}=K^{ij}\partial _{x}^{{}}\frac{\delta \mathbf{H}_{8}}{%
\delta u^{j}},  \label{eq:15}
\end{equation}%
where the Hamiltonian densities $h_{7}=a^{5}$ and $h_{8}=\frac{1}{2}a^{6}$,
while the momentum density $h_{6}=a^{3}$.

In the original coordinates $a^{k}(\mathbf{u})$ this Dubrovin--Novikov type
first order Hamiltonian operator becomes 
\begin{equation}
A_{1}^{ij}=%
\begin{pmatrix}
0 & 0 & 0 & -1 & 0 & 0 \\ 
0 & -1 & 0 & 0 & 0 & 0 \\ 
a^{1} & a^{2} & a^{3} & a^{4} & a^{5} & a^{6} \\ 
-1 & 0 & 0 & 0 & 0 & 0 \\ 
a^{2} & a^{4} & a^{5} & R & P & S \\ 
2a^{3} & 2a^{5} & 2a^{6} & 2P & 2S & 2Q%
\end{pmatrix}%
\partial _{x}^{{}}+\partial _{x}^{{}}%
\begin{pmatrix}
0 & 0 & a^{1} & -1 & a^{2} & 2a^{3} \\ 
0 & -1 & a^{2} & 0 & a^{4} & 2a^{5} \\ 
0 & 0 & a^{3} & 0 & a^{5} & 2a^{6} \\ 
-1 & 0 & a^{4} & 0 & R & 2P \\ 
0 & 0 & a^{5} & 0 & P & 2S \\ 
0 & 0 & a^{6} & 0 & S & 2Q%
\end{pmatrix}%
,  \label{eq:32}
\end{equation}%
where%
\begin{equation*}
P=\frac{a^{3}a^{4}+a^{6}}{a^{1}},\text{ \ }R=\frac{2a^{5}+a^{2}a^{4}}{a^{1}},%
\text{ \ }S=\frac{2a^{3}a^{5}-a^{2}a^{6}}{a^{1}},\text{ \ }%
Q=(a^{5})^{2}-a^{4}a^{6}+\frac{%
(a^{3})^{2}a^{4}+a^{3}a^{6}-2a^{2}a^{3}a^{5}+(a^{2})^{2}a^{6}}{a^{1}}.
\end{equation*}

\subsection{The Structure of the Second Hamiltonian Operator}

\label{sec:struct-second-hamilt}

Having in mind the fundamental examples of the case $n=N=3$
\cite{FGMN,KN1,KN2} we \emph{conjecture} that also our six-component systems
admit a Dubrovin--Novikov type third-order Hamiltonian operator, denoted by
$\hat{A}_{2}$.

Any bi-Hamiltonian hierarchy
\begin{equation}
u_{t^{k}}^{i}=A_{2}^{is}\frac{\delta \mathbf{H}_{k}}{\delta u^{s}}=A_{1}^{is}%
\frac{\delta \mathbf{H}_{k+1}}{\delta u^{s}},\text{ \ }i=1,...,n  \label{a}
\end{equation}%
is determined by two compatible Hamiltonian operators $\hat{A}_{1}$ and $
\hat{A}_{2}$, if their Schouten bracket vanishes: $[\hat{A}_{1},\hat{A}
_{2}]=0$. We recall that the Hamiltonian property of a differential operator
$\hat{A}$ is equivalent to its formal skew-adjointness $\hat{A}^{\ast }=-%
\hat{A}$ and $[\hat{A},\hat{A}]=0$, or the fact that $\hat{A}$ defines a
Poisson bracket on the space of conservation law densities:
\begin{displaymath}
  \{\mathbf{H}_k,\mathbf{H}_m\}_{\hat{A}} =
  \int \frac{\delta \mathbf{H}_k}{\delta u^i}A^{ij}\frac{\delta\mathbf{H}_m}
  {\delta u^j}dx
\end{displaymath}
According to Magri's Theorem (see \cite{magri}) all functionals
$\mathbf{H}_{p}=\int
h_{p}(\mathbf{u},\mathbf{u}_{x},\mathbf{u}_{xx},\mathbf{...})dx$ where $h_p$ is
a conservation law density commute with each other under either one of the
Poisson brackets defined by the operators
(i.e. $\{\mathbf{H}_{k},\mathbf{H}_{m}\}_{1}=0$ and independently
$\{\mathbf{H}_{k},\mathbf{H}_{m}\}_{2}=0$). In general, the reconstruction of
a bi-Hamiltonian nature of an integrable evolutionary system
  starting from the hierarchy of conservation law densities is a very
complicated problem even if both Hamiltonian operators are local. Now we assume
that a given evolutionary system
\begin{equation*}
u_{t}^{i}=U^{i}(\mathbf{u},\mathbf{u}_{x},\mathbf{u}_{xx},\mathbf{...})
\end{equation*}%
has a compatible pair of Dubrovin--Novikov type first and third order
Hamiltonian operators $\hat{A}_{1}$, $\hat{A}_{2}$, respectively. This means
that the above evolutionary system can be written in two different forms
(see (\ref{a})) 
\begin{equation}
u_{t}^{i}=A_{2}^{is}\frac{\delta \mathbf{H}_{1}}{\delta u^{s}}=A_{1}^{is}%
\frac{\delta \mathbf{H}_{2}}{\delta u^{s}},\text{ \ }i=1,...,n,  \label{e}
\end{equation}%
where (see detail in \cite{DN} and \cite{DN2}) 
\begin{equation}
A_{1}^{ij}=g_{1}^{ij}(\mathbf{u})\partial _{x}+b_{1k}^{ij}(\mathbf{u}%
)u_{x}^{k},  \label{z}
\end{equation}%
\begin{multline}
A_{2}^{ij}=g_{2}^{ij}(\mathbf{u})\partial _{x}^{3}+b_{2k}^{ij}(\mathbf{u}%
)u_{x}^{k}\partial _{x}^{2}+[c_{2k}^{ij}(\mathbf{u})u_{xx}^{k}+c_{2km}^{ij}(%
\mathbf{u})u_{x}^{k}u_{x}^{m}]\partial _{x}  \label{eq:17} \\
+d_{2k}^{ij}(\mathbf{u})u_{xxx}^{k}+d_{2km}^{ij}(\mathbf{u}%
)u_{xx}^{k}u_{x}^{m}+d_{2kmp}^{ij}(\mathbf{u})u_{x}^{k}u_{x}^{m}u_{x}^{p},
\end{multline}%
while the Hamiltonian functionals $\mathbf{H}_{1}=\int h_{1}(\mathbf{u},%
\mathbf{u}_{x},\mathbf{u}_{xx},\mathbf{...})dx$ and $\mathbf{H}_{2}=\int
h_{2}(\mathbf{u},\mathbf{u}_{x},\mathbf{u}_{xx},\mathbf{...})dx$ will be
specified in the next Subsection.

Let us recall some basic facts on the geometry of homogeneous first and
third-order Hamiltonian operators. We suppose that these Hamiltonian
operators have non-degenerate leading terms, i.e. $\det g_{1}^{ij}\neq 0$
and $\det g_{2}^{ij}\neq 0$. The Hamiltonian operators~(\ref{z}) and %
\eqref{eq:17} are form-invariant under point transformations of the
dependent variables, $\tilde{u}=\tilde{u}(\mathbf{u})$. In this case the
coefficients of both Hamiltonian operators~(\ref{z}) and (\ref{eq:17})
transform as differential-geometric objects. For instance, $g_{1}^{ij}$ and $%
g_{2}^{ij}$ transform as a $(2,0)$-tensors, so that their inverse $%
g_{ij}^{1} $ and $g_{ij}^{2}$ define pseudo-Riemannian metrics (note that
the second one is not flat in general), the expressions $%
g_{js}^{1}b_{1k}^{si}$, $-\frac{1}{3}g_{js}^{2}b_{2k}^{si}$, $-\frac{1}{3}%
g_{js}^{2}c_{2k}^{si}$, $-g_{js}^{2}d_{2k}^{si}$ transform as Christoffel
symbols of affine connections, etc. \cite{DN2}. B.A. Dubrovin and S.P.
Novikov proved that the $\hat{A}_{1}$ is Hamiltonian if and only if $\Gamma
_{1jk}^{i}=-g_{js}^{1}b_{1k}^{si}$ is the Levi-Civita connection of the
metric $g_{ij}^{1}$ and is flat \cite{DN}. It was conjectured by S.P.
Novikov that the last connection, $\Gamma _{2jk}^{i}=-g_{js}^{2}d_{2k}^{si}$
must be symmetric (with respect to low indices) and flat; this was confirmed
in \cite{GP91}, see also \cite{Doyle}. Therefore, there exists a coordinate
system $a^{k}(\mathbf{u})$ such that $\Gamma _{2jk}^{i}$ vanish. These
coordinates are determined up to affine transformations. We note here that $%
a^{i}$ are nothing but the conservation law densities of Casimirs of $\hat{A}%
_{2}$. In these coordinates $a^{k}$ the last three terms in (\ref{eq:17})
vanish, leading to the simplified expression \cite{GP97}, 
\begin{equation}
A_{2}^{ij}=\partial _{x}^{{}}\left( g^{ij}\partial
_{x}^{{}}+c_{k}^{ij}a_{x}^{k}\right) \partial _{x}^{{}}.  \label{casimir}
\end{equation}%
Here and below we omit the index $2$ in the notation of metric coefficients $%
g_{ik}^{2}(\mathbf{a})$ and connection coefficients $c_{2k}^{ij}(\mathbf{a})$.
In \cite{fpv} (using results from \cite{GP97}) it was proved that this
operator $\hat{A}_{2}$ is Hamiltonian (ie skew-adjoint and with
  vanishing Schouten bracket) if and only if the following system is
satisfied:
\begin{subequations}
\begin{align}
& c_{skm}=\frac{1}{3}(g_{sm,k}-g_{sk,m}),  \label{eq:19} \\
& g_{mk,s}+g_{ks,m}+g_{ms,k}=0,  \label{eq:20} \\
& c_{msk,l}=-g^{pq}c_{pml}c_{qsk}.  \label{eq:21}
\end{align}%
where $c_{ijk}=g_{iq}g_{jp}c_{k}^{pq}$. These conditions are invariant under
a class of reciprocal transformations that include affine transformations of
the flat coordinates of the last connection. Note that metrics fulfilling %
\eqref{eq:20} are Monge metrics of quadratic line complexes \cite%
{Dolgachev,fpv}.

Finding a Hamiltonian formulation that involves a third-order operator $\hat{%
A}_{2}$ for a hydrodynamic type system is not simple because the Hamiltonian
density will be non-local (see~\eqref{eq:6} and below the Hamiltonian
density $h_{1}$) in hydrodynamic variables $a^{k}$. However, the WDVV
hydrodynamic type systems are systems of conservation laws (see \eqref{eq:10}%
) 
\end{subequations}
\begin{equation}
a_{t}^{i}=(v^{i}(\mathbf{a}))_{x}.  \label{aa}
\end{equation}%
This means that after a potential substitution $a^{i}=b_{x}^{i}$ we obtain
the \textit{nonlinear} system $b_{t}^{i}=v^{i}(\mathbf{b}_{x})$ and the
Hamiltonian can be obtained by solving the following system of PDEs:%
\begin{equation}
v^{i}(\mathbf{b}_{x})=-(g^{ij}(\mathbf{b}_{x})\partial _{x}+c_{k}^{ij}(%
\mathbf{b}_{x})b_{xx}^{k})\frac{\delta \mathbf{H}}{\delta b^{j}}.
\label{eq:57}
\end{equation}%
In the field variables $b^{k}$ the Hamiltonian density
becomes\footnote{In general such a density depends polynomially on the
  independent variable $x$. However, in the cases considered in this paper the
  Hamiltonian densities do not depend explicitly on $x$.}
\textit{local}
(see detail in Subsection \ref{sec:hamiltonian}).

\subsection{Reconstructing the Second Hamiltonian Operator}

The metric coefficients $g_{ik}(\mathbf{a})$ completely determine
a Dubrovin--Novikov type third order Hamiltonian operator $\hat{A}_{2}$.
Indeed, the connection coefficients $c_{ijk}(\mathbf{a})$ are expressible
via metric coefficients (see \eqref{eq:19}), while the metric coefficients
must fulfill the Potemin system (here we preserved \eqref{eq:20} and
substituted \eqref{eq:19} into \eqref{eq:21})%
\begin{equation*}
g_{mk,s}+g_{ks,m}+g_{ms,k}=0,
\end{equation*}%
\begin{equation}
g_{mk,sl}-g_{ms,kl}=-\frac{1}{3}g^{pq}(g_{pl,m}-g_{pm,l})(g_{qk,s}-g_{qs,k}).
\label{q}
\end{equation}%
Thus if some candidate to be metric coefficients $g_{ik}(\mathbf{u})$
are found in an arbitrary coordinate system $u^{k}$, one must look for point
transformations $a^{k}(\mathbf{u})$ such that the metric coefficients
$g_{ik}(\mathbf{a} )=g_{ms}(\mathbf{u})\frac{\partial u^{m}}{\partial
  a^{i}}\frac{\partial u^{s} }{\partial a^{k}}$ will satisfy the Potemin
system. Theoretically this is a very complicated task.  However, for
particular cases, for instance, for non-diagonalizable hydrodynamic type
systems this is an algorithmically solvable problem.

\textbf{Our main observation} is: if an integrable hierarchy of evolutionary
equations (\ref{a}) contains a commuting flow, which is a hydrodynamic type
system (see, for instance, \eqref{eq:10}, (\ref{aa})), then:

1. its conservation law densities are quasi-homogeneous polynomials (i.e.  they
are homogeneous polynomials with respect to any derivatives of
\textquotedblleft $x$\textquotedblright , but coefficients could
depend on the field variables $u^{k}$ in an arbitrary way). For instance, the
first two higher conservation law densities of them have the form\footnote{%
  any conservation law density is determined up to a total $x$-derivative}
\begin{equation}
h_{1}=a_{sm}(\mathbf{u})u_{x}^{s}u_{x}^{m},\text{ \ }h_{2}=a_{ms}^{(1)}(%
\mathbf{u})u_{xx}^{m}u_{xx}^{s}+a_{lms}^{(2)}(\mathbf{u}%
)u_{xx}^{l}u_{x}^{m}u_{x}^{s}+a_{lsmp}^{(3)}(\mathbf{u}%
)u_{x}^{l}u_{x}^{s}u_{x}^{m}u_{x}^{p}.  \label{h}
\end{equation}

2. the commuting flow (\ref{e}) becomes\footnote{%
Here by \textquotedblleft
l.o.t.\textquotedblright\ we mean \textquotedblleft lower order
terms\textquotedblright .}%
\begin{equation}
u_{t}^{i}=(g_{2}^{ip}\partial _{x}^{3}+\text{l.o.t.})(-2a_{pm}u_{xx}^{m}+%
\text{l.o.t.})=(g_{1}^{ip}\partial _{x}+\text{l.o.t.}%
)(2a_{pm}^{(1)}u_{xxxx}^{m}+\text{l.o.t.}).  \label{recurs}
\end{equation}
Then we obtain the relationship in highest order terms (the
  coefficient of $u_{xxxxx}^{m}$)
\begin{equation}
g_{2}^{ip}a_{pm}=-g_{1}^{ip}a_{pm}^{(1)}.  \label{eq:16}
\end{equation}%
Thus if the metric coefficients $g_{1}^{ik}(\mathbf{u})$ are known and $\det
a_{jm}\neq 0$, then the metric coefficients of the second operator
  can be found by
\begin{equation}
g_{2}^{ij}=-g_{1}^{ip}a_{pm}^{(1)}c^{mj},  \label{metr}
\end{equation}%
where $a_{im}c^{mj}=\delta _{i}^{j}$ and $c^{ip}a_{pj}=\delta _{j}^{i}$.

Indeed the commuting hydrodynamic type system has a local Hamiltonian
structure (see (\ref{z}))%
\begin{equation*}
u_{t^{-1}}^{i}=A_{1}^{is}\frac{\delta \mathbf{H}_{0}}{\delta u^{s}},
\end{equation*}%
where the Hamiltonian density $h_{0}(\mathbf{u})$ depends on field variables 
$u^{k}$ only. Then the next commuting flow%
\begin{equation*}
u_{t^{0}}^{i}=A_{2}^{is}\frac{\delta \mathbf{H}_{0}}{\delta u^{s}}=A_{1}^{is}%
\frac{\delta \mathbf{H}_{1}}{\delta u^{s}}
\end{equation*}%
is an evolutionary system of \textit{third} order (see
\eqref{eq:17}). In view
of the homogeneity of the operators $\hat{A}_1$ and $\hat{A}_2$,  this
means that $h_{1}$ can be chosen in above (left) form (\ref{h})
  up to total $x$-derivatives:
\begin{equation*}
\tilde{h}_{1}=\tilde{a}_{sm}(\mathbf{u})u_{x}^{s}u_{x}^{m}+b_{s}(\mathbf{u}%
)u_{xx}^{s}=[\tilde{a}_{sm}(\mathbf{u})-b_{s,m}(\mathbf{u}%
)]u_{x}^{s}u_{x}^{m}+(b_{s}(\mathbf{u})u_{x}^{s})_{x}.
\end{equation*}%
Then next commuting flow (\ref{e}) is an evolutionary system of \textit{fifth%
} order. This means that $h_{2}$ can be chosen in above (right) form (\ref{h}%
).

Thus we constructed a link between metric coefficients $g_{1}^{ip},g_{2}^{ip}$
and coefficients $a_{sm}(\mathbf{u}),a_{ms}^{(1)}(\mathbf{u})$
in (\ref{h}) only. This means: if one knows a metric $g_{1}^{ip}$ and two
conservation law densities $h_{1},h_{2}$, then metric coefficients $g_{2}^{ip}$
can be found from (\ref%
{metr}).


In general, any integrable hydrodynamic-type system of (nonlinear) evolutionary
PDEs possesses infinitely many local conservation laws of arbitrary order with
respect to higher derivatives of the field variables $u^{k}$ (of the
independent variable \textquotedblleft $x$\textquotedblright\ only)
\begin{equation*}
(h(\mathbf{u},\mathbf{u}_{x},\mathbf{u}_{xx},\mathbf{...)})_{t}=(f(\mathbf{u}%
,\mathbf{u}_{x},\mathbf{u}_{xx},\mathbf{...}))_{x}.
\end{equation*}%
However in practice conservation laws like in \eqref{h} cannot be easily
found even in three-component case; in the six component case the direct
search of such conservation law densities is probably impossible by any
computer algebra system on existing workstations. Nevertheless, this problem
is effectively solvable if a Lax pair is known. In such a case the
complexity of computation is determined by the complexity of a Taylor
expansion (see details below).

In general any integrable system of nonlinear evolutionary PDEs has at least a
finite number of linearly independent hydrodynamic conservation law densities
$h(\mathbf{u})$. In the non-diagonalizable case (like the WDVV systems), if the
system is endowed by a first-order Dubrovin-Novikov homogeneous Hamiltonian
operator, the dimension of the space of hydrodynamic integrals is $n+2$, where
$n$ is the number of components \cite{fs}. Flat coordinates for the Hamiltonian
operator can be selected among the hydrodynamic integrals (see details in
\cite{MaksTsarEgor}). For instance, if $N=3$, then WDVV associativity equation
reduces to a three-component non-diagonalizable hydrodynamic type system which
has just \textit{five} hydrodynamic conservation law densities (see details in
\cite{FGMN}); if $N=4$, then the WDVV associativity equations reduce to a
compatible pair of six-component non-diagonalizable hydrodynamic type (see
\eqref{eq:10} and \eqref{eq:15}) which have just \textit{nine} hydrodynamic
conservation law densities, i.e. flat coordinates of the first Hamiltonian
structure $u^{0}=a^{1},u^{1},u^{2},u^{3},u^{4},u^{5}=a^{4}$, a momentum density
$a^{3}$ quadratic in field variables $u^{k}$ and two Hamiltonian densities
$a^{5},a^{6}$ rational expressions with respect to these flat coordinates. In
these flat coordinates $g_{1}^{ij}=K^{ij}$ is a constant symmetric
non-degenerate matrix. Thus\footnote{%
  here again and everywhere below we omit the index $2$ of the metric
  $g_{2}^{ik}$.} (see (\ref{metr}))
\begin{equation}
g^{ij}(\mathbf{u})=-K^{ip}a_{pm}^{(1)}c^{mj}.  \label{k}
\end{equation}%
Once the metric coefficients $g^{ij}(\mathbf{u})$ of the contravariant metric
are found we want to prove that the covariant metric
$g_{ik}(\mathbf{a})=g_{ms}(\mathbf{u})\frac{\partial u^{m}}{\partial
  a^{i}}\frac{\partial u^{s}}{\partial a^{k}}$ is a Monge metric. To do that
one should first find Casimir densities $a^{k}$ of a Dubrovin--Novikov type
third order Hamiltonian operator $\hat{A}_{2}$. In an arbitrary coordinate
system $u^{k}$ the metric coefficients $g^{ij}(\mathbf{u})$ cannot completely
determine such a Hamiltonian operator $\hat{A}_{2}$ (see \eqref{eq:17}), even
if all other coefficients are connected with each other via skew-symmetry and
Jacobi identity conditions. Thus finding the Casimir densities $a^{k}(\mathbf{u
})$ is really an important problem. Theoretically they can be found easily,
because they belong to a finite number of hydrodynamic conservation law
densities, which we already discussed above.  However, in our case, we have
natural candidates to be Casimir densities.  Indeed, below we prove that
Casimir densities $a^{k}(\mathbf{u})$ of a Dubrovin--Novikov type third order
Hamiltonian operator are precisely coordinates $a^{k}$ from \eqref{eq:10},
\eqref{eq:11}, whose relationship with flat coordinates $u^{s}$ of a
Dubrovin--Novikov type first order Hamiltonian operator is given according to
the Vi\`{e}te formulae by \eqref{eq:20}.

In our case non-diagonalizable hydrodynamic type systems \eqref{eq:10}, %
\eqref{eq:15} admit four infinite sequences of homogeneous conservation law
densities. The sequences can be deduced from the Lax pair by a standard
technique known in integrable systems as follows. The members of such
sequences of degree $2$ (\emph{i.e.}, whose densities are quadratic in
velocities) fulfill a nondegeneracy hypothesis and are enough to reconstruct
the leading term of the Hamiltonian operator $\hat{A}_{2}$.

By eliminating $\psi _{1}$, $\psi _{2}$, $\psi _{3}$ from \eqref{eq:11} we
obtain the single linear PDE 
\begin{multline*}
\lambda ^{2}\frac{a^{3}}{a^{1}}\psi _{xx}+\lambda ^{3}\left( a^{5}-\frac{%
a^{2}a^{3}}{a^{1}}\right) \psi _{x}+\lambda ^{4}\left( a^{6}-\frac{%
(a^{3})^{2}}{a^{1}}\right) \psi \\
=\left( \frac{1}{a^{1}}\psi _{xx}-\lambda \frac{a^{2}}{a^{1}}\psi
_{x}-\lambda ^{2}\frac{a^{3}}{a^{1}}\psi \right) _{xx}+\left[ \lambda
^{3}\left( \frac{a^{2}a^{3}}{a^{1}}-a^{5}\right) \psi +\lambda ^{2}\left( 
\frac{(a^{2})^{2}}{a^{1}}-a^{4}\right) \psi _{x}-\lambda \frac{a^{2}}{a^{1}}%
\psi _{xx}\right] _{x}.
\end{multline*}%
The substitution $\psi =\exp \int rdx$ yields a nonlinear ordinary
differential equation on the function $r$ and its first, second and third
order derivatives. This function $r$ plays the role of a generating function
of conservation law densities with respect to the parameter $\lambda $ for
both systems \eqref{eq:10}. The expansion of $r$ at infinity (i.e. $\lambda
\rightarrow \infty $)%
\begin{equation*}
r=\lambda h_{-1}+h_{0}+\frac{h_{1}}{\lambda }+\frac{h_{2}}{\lambda ^{2}}+...,
\end{equation*}%
in the above equation leads to a sequence of differential relationships
between the coefficients $h_{-1}$, $h_{0}$, $h_{1}$,\dots\ The leading term
(the coefficient of $\lambda ^{3}$) coincides with the characteristic
equation of the eigenvalues of the matrix $\mathbf{A}$ (see \eqref{eq:12}).
Thus the expansion $r$ with respect to the parameter $\lambda $ has four
branches of conservation law densities, where we identify $h_{-1}$ to each
of four roots $u^{k}$ of characteristic polynomial \eqref{eq:12},
correspondingly. So, changing coordinates $a^{k}$ to flat coordinates $u^{m}$
(see \eqref{eq:13}, \eqref{eq:14}, \eqref{eq:15}, \eqref{eq:32}) in these
expansions we have four branches of conservation law densities, \emph{i.e.} 
\begin{equation*}
r_{(k)}=\lambda u^{k}+h_{0k}[\mathbf{u}]+\frac{h_{1k}[\mathbf{u}]}{\lambda }+%
\frac{h_{2k}[\mathbf{u}]}{\lambda ^{2}}+...,\text{ \ }k=1,2,3,4,
\end{equation*}%
whose coefficients have more or less compact form (in comparison with
expressions $h_{ik}[\mathbf{a}]$ in original coordinates $a^{k}(\mathbf{u})$%
) and depend on field variables $u^{k}$ as well as their higher derivatives
with respect to the independent variable \textquotedblleft $x$%
\textquotedblright .

The expressions of such conservation law densities are quasihomogeneous
polynomials of degrees $\deg h_{ik}=i+1$ with respect to the grading $\deg
u=0$, $\deg \partial _{x}^{{}}=1$, and their coefficients are expressible via
rational functions of these field variables $u^{k}$. We computed (using Reduce
\cite{cde}) all expressions of $h_{ik}$, for $k=1$, $2$, $3$, $4$ and $i=0$,
$1$, $2$, $3$, the non-trivial ones are of the form (cf. (\ref{h}))
\begin{align}
h_{1k}& =-\frac{1}{2}G_{ksm}(\mathbf{u})u_{x}^{s}u_{x}^{m},  \label{eq:29} \\
h_{3k}& =Q_{kms}^{(1)}(\mathbf{u})u_{xx}^{m}u_{xx}^{s}+Q_{klms}^{(2)}(%
\mathbf{u})u_{xx}^{l}u_{x}^{m}u_{x}^{s}+Q_{klsmp}^{(3)}(\mathbf{u}%
)u_{x}^{l}u_{x}^{s}u_{x}^{m}u_{x}^{p}  \label{eq:100}
\end{align}%
Finding the expressions is not very heavy, but the results need
simplification. Indeed, simplification of the rational expressions of the
coefficients of derivatives consumes the greatest part of computing resources
(18GB of RAM and 6 hours of CPU time). Even after the simplification, the
expressions of the coefficients are huge and it is not worth to write them
down. This is not a problem as the task is to find the Hamiltonian structure
which is checkable by pen and paper (after it has been found!), as one can see
below.

We recall that in the case $N=3$ the above procedure leads to $3$
conservation law densities $h_{1k}$ ($k=1,2,3$), and that their sum is zero 
\cite{FGMN}. We have a similar result in our case $N=4$.

\begin{proposition}
The conservation law densities $h_{11}(\mathbf{u})$, $h_{12}(\mathbf{u})$, $%
h_{13}(\mathbf{u})$ are linearly independent, and we have $
\sum_{k=1}^{4}h_{1k}(\mathbf{u})=0$.
\end{proposition}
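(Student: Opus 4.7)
The approach is to exploit the origin of the densities $h_{1k}$ as the $\lambda^{-1}$ coefficients of the four asymptotic branches $r_{(k)}$ of the scalar generating ODE. Substituting $\psi=\exp\int r\,dx$ into the scalar Lax equation displayed just before the proposition produces a nonlinear ODE in $r$ whose leading $\lambda$-behaviour reproduces the characteristic polynomial \eqref{eq:12} with $\rho=r/\lambda$; the four WKB branches $r_{(k)}=\lambda u^k+h_{0k}+h_{1k}/\lambda+\cdots$ are its four roots. Applied order by order in $1/\lambda$, Vieta's formulas for this polynomial express $\sum_k r_{(k)}$ as the ratio of a subleading to the leading $r$-coefficient: at leading order this recovers $\sum_k u^k=2a^2$, in agreement with \eqref{eq:13}, and at order $\lambda^{-1}$ it produces an explicit rational expression for $\sum_k h_{1k}$ in $\mathbf{a}$ and its $x$-derivatives.

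The key structural observation is that the subleading corrections to the characteristic polynomial arise only from the $\psi_x$, $\psi_{xx}$ and right-hand-side derivative terms, so $\sum_k h_{1k}$ is automatically a total $x$-derivative of a function of the fields. In the flat coordinates $\mathbf{u}$ the density $h_{1k}$ is quadratic in $u_x^s$ with coefficients depending only on $\mathbf{u}$ (cf.\ \eqref{eq:29}); but any such density that is $\partial_x$-exact must vanish identically, because $\partial_x B(\mathbf{u})$ is linear in $u_x$ and $\partial_x(B_s(\mathbf{u})u_x^s)$ carries an unavoidable $u_{xx}^s$ contribution that cannot be cancelled. Hence $\sum_k h_{1k}=0$ on the nose, mirroring the analogous three-branch identity in \cite{FGMN}.

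For linear independence, observe that the symmetric group $S_4$ acts on the ordered tuple $(u^1,u^2,u^3,u^4)$ and thereby permutes both the four branches $r_{(k)}$ and the four densities $h_{1k}$. The $4$-dimensional space spanned by $h_{11}$, $h_{12}$, $h_{13}$, $h_{14}$ therefore carries the standard permutation representation of $S_4$; by the first part its one-dimensional trivial subrepresentation is zero, so the four densities span at most the three-dimensional irreducible standard representation of $S_4$, in which any three of the four basis vectors are linearly independent. It thus suffices to exhibit a single nonzero $h_{1k}$, which is manifest from the explicit computer-algebra output for $G_{ksm}(\mathbf{u})$.

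The hard part will be the bookkeeping behind the sum identity: while the Vieta mechanism together with the non-existence of quadratic-in-$u_x$ total derivatives forces the result, actually verifying that the rational expression produced by the subleading expansion of the scalar ODE collapses to zero is computationally heavy and is most naturally done with the same symbolic-algebra set-up used to generate the $h_{1k}$ themselves. The linear-independence statement, by contrast, reduces to an observation about the $S_4$-representation structure combined with a single non-vanishing check.
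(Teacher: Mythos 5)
The paper's own proof of this Proposition is a one-line appeal to a computer algebra system, so a structural argument is welcome, and your $S_4$ argument for linear independence is genuinely better than what the paper offers: granting $\sum_k h_{1k}=0$ and the equivariance $h_{1\sigma(k)}(\mathbf{u})=h_{1k}(\sigma^{-1}\mathbf{u})$ (which holds because each branch is produced by the same recursion, depending on $u^1,\dots,u^4$ only through the $S_4$-invariant coefficients of \eqref{eq:12} and the distinguished root, and because the normal form \eqref{eq:29} obtained by integrating away all $u_{xx}$-terms is canonical), the space of constant-coefficient relations among $h_{11},\dots,h_{14}$ is an $S_4$-subrepresentation of $\mathrm{triv}\oplus\mathrm{std}$ containing $\mathrm{triv}$, hence is either $\mathrm{triv}$ or everything; one nonvanishing density then forces any three of the four to be independent. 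This replaces a large rank computation by a single evaluation. Your degree-counting lemma is also correct: an expression $C_{sm}(\mathbf{u})u_x^su_x^m$ that equals $\partial_xF$ forces $F=F_0(\mathbf{u})+B_s(\mathbf{u})u_x^s$ by homogeneity, and the terms $F_{0,s}u_x^s$ and $B_su_{xx}^s$ cannot be matched, so the expression vanishes.

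The genuine gap is in the step that feeds that lemma. The function $r=(\ln\psi)_x$ satisfies a \emph{nonlinear} third-order ODE, not a polynomial equation; its four asymptotic branches are not roots of a polynomial, so ``Vieta's formulas applied order by order in $1/\lambda$'' do not express $\sum_k r_{(k)}$ as a ratio of coefficients, and as written this step has no content. The correct mechanism is Abel's formula for the underlying \emph{linear} fourth-order equation $p_4\psi''''+p_3\psi'''+\dots=0$: with $\psi_{(k)}=\exp\int r_{(k)}dx$ one has $\sum_kr_{(k)}=(\ln W)_x-(\ln\Delta)_x=-p_3/p_4-(\ln\Delta)_x$, where $\Delta=\det\bigl(\psi_{(k)}^{(j)}/\psi_{(k)}\bigr)$. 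Reading $p_3,p_4$ off the displayed scalar equation gives $-p_3/p_4=2\lambda a^2+2(\ln a^1)_x$, which has no $\lambda^{-1}$ term, so $\sum_kh_{1k}$ is the $\lambda^{-1}$ coefficient of $-(\ln\Delta)_x$, manifestly a total $x$-derivative, and your lemma then kills it. This Wronskian computation is the actual content of the sum identity and is missing from your write-up; without it your proof collapses, as you yourself concede in the final paragraph, to the same symbolic verification the paper performs. Supply the Abel/Wronskian step and the equivariance check, and you obtain an honest pen-and-paper proof modulo one nonvanishing evaluation.
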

\begin{proof}
The proof is trivial using a computer algebra system.
\end{proof}

\begin{remark}
  The conservation law densities $h_{0k}$ and $h_{2k}$ are trivial for
  three-component WDVV associativity equations in \cite{FGMN}, and $h_{0k}$ is
  trivial in our six-component hydrodynamic type systems (the
    proof of triviality is done by CDE \cite{cde}). For instance, in the
  three-component WDVV case we have
\begin{equation*}
h_{01}=-\frac{1}{2}\partial _{x}\ln (u^{1}-u^{2})(u^{1}-u^{3}).
\end{equation*}%
However, we have no general proof that this holds for all even-indexed
conservation law densities in the above expansion or moreover for all higher
WDVV associativity equations.
\end{remark}

Now, we show that the above sequence of conservation law densities %
\eqref{eq:29}, \eqref{eq:100} enables us to reconstruct the metric
coefficients $g^{ik}(\mathbf{a})$ of the leading term $\partial _{x}^{3}$ in
the second Hamiltonian operator $\hat{A}_{2}$. Below we formulate the
Theorem, whose validity is general and is not limited to our particular
hydrodynamic type systems of PDEs. However, this Theorem is based on the
interplay between two coordinate systems, i.e. flat coordinates $u^{k}(%
\mathbf{a})$ of a first local homogeneous Hamiltonian structure of first
order and Casimirs $a^{k}(\mathbf{u})$ of a second local homogeneous
Hamiltonian structure of third order (see again \eqref{eq:13}, \eqref{eq:14},
\eqref{eq:15}, \eqref{eq:32}). In our case, for any fixed $i=1,2,3,4$ the
matrices $G_{ijk}(\mathbf{u})$ are degenerate. However, we can introduce the
matrix $\mathbf{\tilde{G}}$ whose coefficients are $\xi ^{m}G_{mpq}$, where $%
\xi ^{k}$ are arbitrary constants. Without loss of generality we may choose
any constants $\xi ^{k}$ such that the matrix $\mathbf{\tilde{G}}$ is
non-degenerate; then we can introduce the inverse matrix $\mathbf{C}$ whose
coefficients we denote $C^{km}$.

\begin{theorem}
\label{th:reconstruction} Let $\hat{A}_{1}$ and $\hat{A}_{2}$ be two
Dubrovin--Novikov type Hamiltonian operators of first and of third order,
respectively, for hydrodynamic type systems \eqref{eq:5}, \eqref{eq:10},
written in flat coordinates $u^{k}$, i.e. $\hat{A}_{1}=\mathbf{K}%
\partial_{x} $.

Let $\mathbf{H}_{1k}=\int h_{1k}dx$ and $\mathbf{H}_{2k}=\int h_{3k}dx$ be
two sets of homogeneous conservation law functionals of degrees $2$ and $4$,
respectively, for any $k$ in a given range;

Then the metric coefficients $g^{ik}(\mathbf{u})$ of a Dubrovin--Novikov
type third order Hamiltonian operator $\hat{A}_{2}$ are uniquely determined
by the formula 
\begin{equation}
g^{ij}(\mathbf{u})=2\xi ^{m}K^{ip}Q_{mpq}^{(1)}C^{qj}.  \label{eq:30}
\end{equation}
\end{theorem}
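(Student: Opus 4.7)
The plan is to read off the leading-order consequence of the bi-Hamiltonian recursion branch by branch, and then combine the resulting (individually degenerate) linear systems into a single invertible one by contracting with $\xi^k$.

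First, I would fix an index $k$ and apply the recursion \eqref{e} to the pair $(\mathbf{H}_{1k},\mathbf{H}_{2k})$ coming from the same branch of the Lax expansion. Because $\deg h_{1k}=2$, $\deg h_{3k}=4$, and the operators have orders $1$ and $3$, both flows $A_2^{is}\delta\mathbf{H}_{1k}/\delta u^s$ and $A_1^{is}\delta\mathbf{H}_{2k}/\delta u^s$ are fifth-order PDEs; identifying the coefficient of $u^m_{xxxxx}$ on both sides—this is exactly the computation \eqref{recurs}--\eqref{eq:16} applied with $a_{sm}=-\tfrac12 G_{ksm}$ and $a^{(1)}_{ms}=Q^{(1)}_{kms}$—yields
\begin{equation*}
g^{ip}(\mathbf{u})\,G_{kpm}(\mathbf{u}) \;=\; 2\,K^{ip}\,Q^{(1)}_{kpm}(\mathbf{u}).
\end{equation*}

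Next, for any fixed $k$ the three-tensor $G_{kpm}$ is degenerate in $(p,m)$, so this relation alone does not determine $g^{ip}$. Contracting over $k$ against the constants $\xi^k$ chosen so that $\tilde G_{pm}:=\xi^k G_{kpm}$ is invertible (with inverse $C^{qm}$) gives
\begin{equation*}
g^{ip}(\mathbf{u})\,\tilde G_{pm}(\mathbf{u}) \;=\; 2\,\xi^k K^{ip}\,Q^{(1)}_{kpm}(\mathbf{u}),
\end{equation*}
and multiplying by $C^{mj}$ yields the claimed formula~\eqref{eq:30} after relabeling $k\leftrightarrow m$, $m\leftrightarrow q$. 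Uniqueness is automatic from the invertibility of $\tilde G$.

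The only genuinely delicate point is the justification that the densities $h_{1k}$ and $h_{3k}$ coming from the \emph{same} branch $k$ of the Lax expansion are indeed the correct pair to feed into \eqref{eq:16}; this is where I expect the main (conceptual) obstacle. One can argue as follows: the grading $\deg\partial_x=1$, $\deg u=0$ is preserved by both operators, so the Magri chain starting from a degree-$2$ polynomial density must proceed through densities of degrees $4,6,\dots$, and the branch label $k$ is preserved because the generating function $r_{(k)}$ is the spectral expansion around a definite root $u^k$ of the characteristic polynomial. Consequently $\mathbf{H}_{2k}$ is the unique (up to Casimirs of $\hat A_1$) degree-$4$ conservation law whose Hamiltonian flow under $\hat A_1$ matches $\hat A_2\,\delta\mathbf{H}_{1k}$, so the substitution into \eqref{eq:16} is legitimate. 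Existence of a non-degenerate $\tilde G$ is a generic condition on $\xi^k$ and is verified by direct computation in the case at hand.
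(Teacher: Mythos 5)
Your proposal is correct and follows essentially the same route as the paper: equate the coefficients of $u^m_{xxxxx}$ in the recursion $A_1^{ij}\delta\mathbf{H}_{2k}/\delta u^j=A_2^{ij}\delta\mathbf{H}_{1k}/\delta u^j$ to obtain $g^{ip}G_{kpm}=2K^{ip}Q^{(1)}_{kpm}$ for each branch $k$, then contract with $\xi^k$ and invert $\tilde G$. Your closing paragraph on why the degree-$2$ and degree-$4$ densities from the same branch form a legitimate recursion pair is a reasonable supplementary remark, but the paper simply takes the recurrence \eqref{eq:31} as the hypothesis, so no new idea is involved.
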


\begin{proof}
Let us consider the recurrence relation on $\mathbf{H}_{1k}$ and $\mathbf{H}%
_{2k}$ 
\begin{equation}
A_{1}^{im}\frac{\delta \mathbf{H}_{2k}}{\delta u^{m}}=A_{2}^{im}\frac{\delta 
\mathbf{H}_{1k}}{\delta u^{m}}  \label{eq:31}
\end{equation}%
(see equations \eqref{eq:29}, \eqref{eq:100}). Then the variational
derivatives can be rewritten as 
\begin{align*}
\frac{\delta \mathbf{H}_{1k}}{\delta u^{j}}& =G_{kjm}u_{xx}^{m}+\text{lower
order terms}, \\
& \frac{\delta \mathbf{H}_{2k}}{\delta u^{j}}=2Q_{kjm}^{(1)}u_{xxxx}^{m}+%
\text{lower order terms}.
\end{align*}%
Then we have%
\begin{align*}
A_{1}^{ij}\frac{\delta \mathbf{H}_{2k}}{\delta u^{j}}& =K^{ij}\partial
_{x}^{{}}(2Q_{kjm}^{(1)}u_{xxxx}^{m}+\text{lower order terms}) \\
& =2K^{ij}Q_{kjm}^{(1)}u_{xxxxx}^{m}+\text{lower order terms}, \\
A_{2}^{ij}\frac{\delta \mathbf{H}_{1k}}{\delta u^{j}}&
=A_{2}^{ij}(G_{kjm}u_{xx}^{m}+\text{lower order terms}) \\
& =g^{ij}G_{kjm}u_{xxxxx}^{m}+\text{lower order terms}.
\end{align*}%
By equating the coefficients of highest order derivatives we have $%
2K^{im}Q_{kms}^{(1)}=g^{im}G_{kms}$ for $k=1$, $2$, $3$, $4$. By taking the
linear combination with coefficients $\xi ^{k}$ of the above identities we
can invert the matrix $\xi ^{k}G_{kpq}$ on the right-hand side, which yields
the result \eqref{eq:30} (cf. (\ref{k})). The Theorem is proved.
\end{proof}

The sequence of homogeneous conservation law densities for our hydrodynamic
type systems \eqref{eq:10} fulfills the nondegeneracy condition which is
necessary in the above Theorem.

\begin{proposition}
\label{pr:leadingterm} The following linear combination of the four
conservation law densities of degree $2$: 
\begin{equation*}
C_{ij}=G_{1ij}+\frac{1}{3}G_{2ij}+\frac{1}{2}G_{4ij}
\end{equation*}%
fulfills $\det (C_{ij})\neq 0$.
\end{proposition}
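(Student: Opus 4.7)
The plan is to verify the claim by direct computation with the explicit formulas for the matrices $G_{kij}(\mathbf{u})$ that have already been produced by the generating-function expansion described in the preceding subsection. Since the conservation law densities $h_{1k}$ were computed by CDE and are available (even if too large to display), the entries of each $G_{kij}$ are explicit rational functions of the flat coordinates $u^1,u^2,u^3,u^4$ (with $u^0 = a^1$ and $u^5 = a^4$). First I would extract the six symmetric $6\times 6$ matrices $G_{1ij}$, $G_{2ij}$, $G_{4ij}$ from the stored expressions of $h_{11}$, $h_{12}$, $h_{14}$ by reading off the coefficients of $u_x^s u_x^m$; a useful internal consistency check is the identity $\sum_{k=1}^{4} G_{kij}=0$ coming from the preceding proposition.

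Next I would form the linear combination $C_{ij}=G_{1ij}+\tfrac{1}{3}G_{2ij}+\tfrac{1}{2}G_{4ij}$ and show that $\det(C_{ij})\not\equiv 0$ as a rational function of $(u^k)$. The cleanest way to do this is to evaluate $C_{ij}$ at a single convenient generic point $\mathbf{u}_0$ in the domain of definition (avoiding the pole locus coming from the denominators, which involve differences of the roots $u^k$ and powers of $a^1$) and verify that the resulting numerical $6\times 6$ symmetric matrix has non-vanishing determinant. Non-vanishing at one point implies non-vanishing as a rational function, which suffices for the subsequent application to Theorem \ref{th:reconstruction}.

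The main obstacle is purely the size of the symbolic expressions for $G_{kij}$: a fully symbolic determinant may be unwieldy, but this is circumvented by the point-evaluation strategy above, which reduces the whole verification to elementary linear algebra over the rationals. The appearance of the specific coefficients $(1,\tfrac{1}{3},0,\tfrac{1}{2})$ is not essential for the non-degeneracy itself; by the observation preceding Theorem \ref{th:reconstruction} any generic $\xi^k$ yields a non-degenerate $\xi^m G_{mpq}$, and this choice is presumably preferred because it produces the most compact form of the resulting metric coefficients $g^{ij}(\mathbf{u})$ in formula \eqref{eq:30}.
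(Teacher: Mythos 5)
Your proposal is correct and matches the paper's approach: the paper simply verifies $\det(C_{ij})\neq 0$ by direct computation in a computer algebra system, exactly the verification you describe. Your point-evaluation shortcut (checking non-vanishing of the determinant at a single generic point away from the poles) is a sensible computational refinement of the same argument, not a different route, and your closing observation that the specific coefficients $\xi^k=(1,\tfrac{1}{3},0,\tfrac{1}{2})$ are just one generic choice is consistent with the paper's discussion preceding Theorem~\ref{th:reconstruction}.
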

\begin{proof}
  The proof is trivial using a computer algebra system.
\end{proof}

We observe that the proof would be very lengthy by pen and paper: the core
computation would be a determinant of a $6\times 6$ matrix whose entries are
rational expressions of degree up to $6$ in both the numerator and the
denominator.

\begin{remark}
One might be tempted to use the recurrence relation~\eqref{eq:31} by a
simpler conservation law density. Indeed, the first nontrivial densities are
the hydrodynamic type densities. However, this choice on the right-hand side
of~\eqref{eq:31} leads to a trivial identity or to an identity which is not
related with the leading coefficient $g^{ij}$. So, the first `useful'
conservation law densities to our purposes are exactly those quadratic in
first derivatives.
\end{remark}

\begin{corollary}
\label{co:leadingterm} If the hydrodynamic type systems \eqref{eq:10} admit
a second Dubrovin--Novikov type third-order Hamiltonian operator $\hat{A}%
_{2} $, then the metric coefficients $g^{ik}(\mathbf{u})$ (i.e. the
coefficients of the leading term $g^{ik}\partial _{x}^{3}$) have the form
which is specified by formula \eqref{eq:30} with $G_{kjm}$ and $%
Q_{kjm}^{(1)} $ found in \eqref{eq:29} and \eqref{eq:100}, and here we
chosen $\xi ^{1}=1$, $\xi ^{2}=1/3$, $\xi ^{3}=0$, $\xi ^{4}=1/2$.
\end{corollary}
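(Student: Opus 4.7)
The plan is to obtain this corollary as a direct specialization of Theorem~\ref{th:reconstruction} to the specific hydrodynamic type systems \eqref{eq:10}, where the nondegeneracy hypothesis is supplied by Proposition~\ref{pr:leadingterm}. In particular, the corollary is not a new computation but a matching of hypotheses.

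First, I would verify that the data required by Theorem~\ref{th:reconstruction} are all available in our setting. The first-order Hamiltonian operator $\hat{A}_{1}=\mathbf{K}\partial_{x}$ with the constant matrix $\mathbf{K}$ in \eqref{eq:14} is known, and the flat coordinates $u^k$ are the ones described in Section~\ref{sec:wdvv-as-commuting}. The two families of conservation law functionals $\mathbf{H}_{1k}=\int h_{1k}\,dx$ and $\mathbf{H}_{2k}=\int h_{3k}\,dx$ are the homogeneous densities of degrees $2$ and $4$ obtained from the Lax pair expansion in~\eqref{eq:29} and~\eqref{eq:100}, indexed by $k=1,2,3,4$. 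Thus the hypotheses on the input data of Theorem~\ref{th:reconstruction} are satisfied.

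Next, I would address the only substantive point: the existence of constants $\xi^{k}$ for which $\tilde{G}_{pq}=\xi^{m}G_{mpq}$ is invertible, since the formula \eqref{eq:30} requires the inverse matrix $C^{qj}$ of $\tilde{G}$. This is precisely the content of Proposition~\ref{pr:leadingterm}: choosing $\xi^{1}=1$, $\xi^{2}=1/3$, $\xi^{3}=0$, $\xi^{4}=1/2$ yields the combination $C_{ij}=G_{1ij}+\tfrac{1}{3}G_{2ij}+\tfrac{1}{2}G_{4ij}$ with $\det(C_{ij})\neq 0$. With this invertibility guaranteed, the matrix $\mathbf{C}$ needed in \eqref{eq:30} is well defined.

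Finally, under the hypothesis that a second Dubrovin--Novikov type third-order Hamiltonian operator $\hat{A}_{2}$ exists, the bi-Hamiltonian recursion \eqref{eq:31} relating $\mathbf{H}_{1k}$ and $\mathbf{H}_{2k}$ must hold for all $k$, and Theorem~\ref{th:reconstruction} then applies verbatim to give $g^{ij}(\mathbf{u})=2\xi^{m}K^{ip}Q^{(1)}_{mpq}C^{qj}$. Substituting the chosen $\xi^{k}$ yields the claimed expression. I do not expect any obstacle: all the work has been done upstream, in the Lax-pair computation of $G_{kjm}$ and $Q^{(1)}_{kjm}$ (carried out by computer algebra), in the linear-independence proposition that fixes a valid $\xi^{k}$, and in Theorem~\ref{th:reconstruction} itself. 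The corollary is simply the assembly of these pieces.
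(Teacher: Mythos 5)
Your proposal matches the paper's (implicit) argument exactly: the corollary is obtained by specializing Theorem~\ref{th:reconstruction} to the systems \eqref{eq:10}, with the nondegeneracy of $\xi^{m}G_{mpq}$ supplied by Proposition~\ref{pr:leadingterm} for the stated choice of $\xi^{k}$, the remaining work being the computer-algebra evaluation of \eqref{eq:30}. No gaps; this is the same route the authors take.
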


The expression of $g^{ij}$ can be found using results from
Theorem~\ref{th:reconstruction}. The explicit computation can be carried out by
Reduce \cite{cde}. Here most computational resources are consumed by the
simplification of the final rational expression of $g^{ij}$. Even after
simplification the expression in coordinates $u^{k}$ (flat coordinates of the
Dubrovin--Novikov type first order Hamiltonian structure) is huge and it is not
worth writing it here.

\section{Integrability of the Potemin System}

\label{sec:potemin}

In this Section we consider the integrability of Potemin system \eqref{eq:19}%
, \eqref{eq:20}, \eqref{eq:21} following from skew-symmetry and Jacobi
identity of homogeneous third order Hamiltonian operators (\ref{casimir}).

Let introduce the linear expressions 
\begin{equation}
\psi _{k}^{\gamma }=\psi _{km}^{\gamma }a^{m}+\omega _{k}^{\gamma },
\label{omega}
\end{equation}%
where $\psi _{km}^{\gamma }$ and $\omega _{k}^{\gamma }$ are such constants
that the matrix $\mathbf{\psi }$ is nondegenerate and%
\begin{equation}
\psi _{km}^{\gamma }=-\psi _{mk}^{\gamma },  \label{c}
\end{equation}%
\begin{equation}
\overset{n}{\sum_{\gamma =1}}(\psi _{is}^{\gamma }\psi _{jk}^{\gamma }+\psi
_{js}^{\gamma }\psi _{ki}^{\gamma }+\psi _{ks}^{\gamma }\psi _{ij}^{\gamma
})=0,\text{ \ }\overset{n}{\sum_{\gamma =1}}(\omega _{i}^{\gamma }\psi
_{jk}^{\gamma }+\omega _{j}^{\gamma }\psi _{ki}^{\gamma }+\omega
_{k}^{\gamma }\psi _{ij}^{\gamma })=0.  \label{b}
\end{equation}

\begin{theorem}
\label{th:zzz} [GVPotemin 2001] The metric coefficients $g_{ik}(\mathbf{a})$
of Potemin system \eqref{eq:19}, \eqref{eq:20}, \eqref{eq:21} can be
presented in the form%
\begin{equation}
g_{mk}=\overset{n}{\sum_{\gamma =1}}\psi _{m}^{\gamma }\psi _{k}^{\gamma }.
\label{tri}
\end{equation}
\end{theorem}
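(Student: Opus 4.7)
The plan is to solve the two Potemin equations in sequence: first \eqref{eq:20}, which is linear in $g_{mk}$ and constrains its polynomial dependence on $a$, and then \eqref{eq:21}, which is nonlinear and will force the algebraic conditions \eqref{c}--\eqref{b} on the coefficients once the general solution of \eqref{eq:20} has been substituted.

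For the first step I would show that \eqref{eq:20} forces $g_{mk}(\mathbf{a})$ to be a polynomial of degree at most two in the $a^k$. Indeed, differentiating \eqref{eq:20} and antisymmetrizing in the resulting four indices, one obtains that the third partial derivatives $g_{mk,srl}$ must vanish by a standard Young-symmetry argument (the tensor is symmetric in $(mk)$, has cyclic vanishing in $(m,k,s)$, and symmetric in $(s,r,l)$, which is only compatible with the zero tensor). Thus I can write $g_{mk}=T_{mk|pq}a^pa^q+L_{mk|p}a^p+C_{mk}$ with constant coefficients symmetric in $(mk)$ and in $(pq)$. Plugging into \eqref{eq:20} gives the cyclic conditions $T_{mk|sq}+T_{ks|mq}+T_{sm|kq}=0$ and $L_{mk|s}+L_{ks|m}+L_{sm|k}=0$.

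Next I would invoke the standard algebraic fact that a tensor symmetric in two index pairs and satisfying the above cyclic identity has the same Young symmetry as an algebraic curvature tensor. Such a tensor factors as a sum of Kulkarni--Nomizu-type products of antisymmetric matrices; concretely, one can find constants $\psi_{km}^{\gamma}=-\psi_{mk}^{\gamma}$ ($\gamma=1,\dots,n$, one $\gamma$ suffices provided the rank is not too small, and otherwise we enlarge the index set) so that
\begin{equation*}
T_{mk|pq}=\tfrac{1}{2}\sum_{\gamma}\bigl(\psi_{mp}^{\gamma}\psi_{kq}^{\gamma}+\psi_{mq}^{\gamma}\psi_{kp}^{\gamma}\bigr),
\qquad
L_{mk|p}=\sum_{\gamma}\bigl(\psi_{mp}^{\gamma}\omega_{k}^{\gamma}+\omega_{m}^{\gamma}\psi_{kp}^{\gamma}\bigr),
\qquad
C_{mk}=\sum_{\gamma}\omega_{m}^{\gamma}\omega_{k}^{\gamma},
\end{equation*}
for suitable constants $\omega_{k}^{\gamma}$. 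Setting $\psi_k^\gamma:=\psi_{km}^\gamma a^m+\omega_k^\gamma$ as in \eqref{omega}, one checks directly that $g_{mk}=\sum_\gamma\psi_m^\gamma\psi_k^\gamma$. The antisymmetry \eqref{c} is imposed by this construction, and a short calculation using it recovers \eqref{eq:20} (every cyclic sum reorganizes into $\psi_a^\gamma(\psi_{bc}^\gamma+\psi_{cb}^\gamma)=0$).

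The main obstacle is the remaining equation \eqref{eq:21}, because it contains the inverse metric $g^{pq}$, which is not polynomial, while the left-hand side $c_{msk,l}$ is a constant tensor once \eqref{eq:19} and the quadratic form of $g_{mk}$ are used. I would clear the denominator by multiplying \eqref{eq:21} by $g_{ir}g_{js}$ (or by using the identity $g^{pq}=\text{cofactor}/\det$ after expressing $\det g$ via the $\psi^\gamma$'s as a sum of Gram-type minors), reducing it to a purely polynomial identity in $a$ with coefficients built from the $\psi_{km}^\gamma$ and $\omega_k^\gamma$. Matching coefficients degree by degree, the leading quadratic term yields the first identity in \eqref{b} (the Pl\"ucker-type relation on the $\psi_{km}^\gamma$) and the linear term yields the second. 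The constant term is then automatically satisfied, which completes the reduction of the Potemin system to the purely algebraic conditions \eqref{c}--\eqref{b}. Conversely, substituting back, any solution of \eqref{c}--\eqref{b} gives, via \eqref{tri}, a metric fulfilling the whole Potemin system.
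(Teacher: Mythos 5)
Your first step is sound: the Monge condition \eqref{eq:20} is the rank-two Killing equation for the flat connection $\partial$, and the standard prolongation/Young-symmetry argument does force $g_{mk}$ to be quadratic in the $a^k$, with the cyclic identities on the coefficients $T_{mk|pq}$, $L_{mk|p}$ that you write down. The gap is in the next step. The cyclic identity only places $T_{mk|pq}$ in the curvature symmetry class $S^{(2,2)}$; while that space is indeed spanned by symmetrized squares of antisymmetric matrices, a generic element requires on the order of $n^2(n^2-1)/12$ such squares (with signs, over $\mathbb{R}$), not $n$, and a single $\gamma$ essentially never suffices. ``Enlarging the index set'' is not an option: the theorem needs exactly $n$ terms with $\det(\psi^\gamma_k)\neq 0$ (cf.\ Corollary~\ref{co:xxx}, $\det g_{ik}=(\det\psi_m^\gamma)^2$, and the whole factorised-operator machinery of Section~\ref{sec:fact-third-order}). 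Worse, nothing in your argument forces the \emph{same} $\psi^\gamma_{km}$ and $\omega^\gamma_k$ to decompose $T$, $L$ and $C$ simultaneously; for instance $C_{mk}=g_{mk}(0)=\sum_\gamma\omega^\gamma_m\omega^\gamma_k$ would make $g(0)$ positive semidefinite, which already fails for the explicit WDVV metric of Theorem~\ref{th:second-hamilt-struct} (this signature issue is shared with the literal statement \eqref{tri} and is resolved by the $\phi_{\beta\gamma}$ version \eqref{gen}, but your coefficient-matching makes it unavoidable). A dimension count confirms the logical problem: the Monge metrics form a space of dimension $\tfrac{1}{12}n(n+1)^2(n+2)$, strictly larger than the family \eqref{tri}, so \eqref{eq:20} alone cannot imply the decomposition.

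The missing ingredient is exactly the nonlinear equation \eqref{eq:21}, which you postpone to the last step and use only to extract \eqref{b} \emph{after} the decomposition has been assumed. In the paper, \eqref{eq:21} is what produces the $\psi^\gamma$ in the first place: it is the flatness of the connection $\nabla$ with coefficients built from $c_{ijk}$, and flatness guarantees $n$ linearly independent parallel covectors, i.e.\ solutions of the linear system \eqref{d}; one then checks that these solutions are linear in $a$ and reconstruct $g$ via \eqref{tri}. The proof of Theorem~\ref{th:zzz} itself is then the easy translation of the three Potemin equations into \eqref{pjat}, the constraints \eqref{b}, and an identity. To repair your argument you would need to bring \eqref{eq:21} into play before claiming the rank-$n$ decomposition, at which point you are essentially reconstructing the parallel-covector argument of Subsection on the decomposition of the Monge metric.
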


\begin{proof}
Taking into account (\ref{c}) and (\ref{tri}), the first equation %
\eqref{eq:19} of the Potemin system leads to 
\begin{equation}
c_{ijk}=-\overset{n}{\sum_{\gamma =1}}\psi _{i}^{\gamma }\psi _{jk}^{\gamma }
\label{pjat}
\end{equation}%
while the second equation \eqref{eq:20} of the Potemin system implies a set
of constraints 
\begin{equation}
\overset{n}{\sum_{\gamma =1}}(\psi _{i}^{\gamma }\psi _{jk}^{\gamma }+\psi
_{j}^{\gamma }\psi _{ki}^{\gamma }+\psi _{k}^{\gamma }\psi _{ij}^{\gamma
})=0,  \label{cycle}
\end{equation}%
which under the substitution $\psi _{k}^{\gamma }=\psi _{km}^{\gamma
}a^{m}+\omega _{k}^{\gamma }$ yields (\ref{b}). Then introducing the inverse
metric 
\begin{equation}
g^{mk}=\overset{n}{\sum_{\gamma =1}}\psi _{\gamma }^{m}\psi _{\gamma }^{k}
\label{four}
\end{equation}%
such that $\psi _{\gamma }^{i}\psi _{k}^{\gamma }=\delta _{k}^{i}$ and $\psi
_{\gamma }^{m}\psi _{m}^{\beta }=\delta _{\gamma }^{\beta }$, substitution
of (\ref{tri}) and (\ref{pjat}) into the third equation \eqref{eq:21} of the
Potemin system yields the identity.
\end{proof}

\begin{corollary}
\label{co:yyy} In the general ($n$ component) case constraints (\ref{b}) can
be resolved. For instance, if $n=3$, then these constraints reduce to a
single equation, i.e. to a sole equation in r.h.s. of (\ref{b}); if $n=4$,
then the constraints reduce to a system of five equations, which contains a
sole equation from l.h.s. of (\ref{b}) and four other equations from r.h.s.
of (\ref{b}).
\end{corollary}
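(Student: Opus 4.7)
The plan is to reinterpret the two constraints in (\ref{b}) as statements about exterior forms. For each $\gamma=1,\ldots,n$ introduce the skew-symmetric matrices $\psi^\gamma = (\psi^\gamma_{ij})$ as components of $2$-forms on $\mathbb{R}^n$, and the vectors $\omega^\gamma = (\omega^\gamma_i)$ as components of $1$-forms. Then, up to a fixed combinatorial factor, the left-hand constraint of (\ref{b}) is the component form of $\sum_\gamma \psi^\gamma \wedge \psi^\gamma = 0$ as a $4$-form, while the right-hand constraint is the component form of $\sum_\gamma \omega^\gamma \wedge \psi^\gamma = 0$ as a $3$-form.

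The key step is to verify this reformulation, which amounts to showing that the tensor
\begin{equation*}
T_{ijks} = \sum_{\gamma=1}^n \bigl(\psi^\gamma_{is}\psi^\gamma_{jk} + \psi^\gamma_{js}\psi^\gamma_{ki} + \psi^\gamma_{ks}\psi^\gamma_{ij}\bigr)
\end{equation*}
appearing on the left of (\ref{b}) is totally antisymmetric in all four indices. Antisymmetry in $(i,j,k)$ is immediate: the expression is already cyclic in these indices, and the transposition $i \leftrightarrow j$ flips the sign thanks to $\psi^\gamma_{ab} = -\psi^\gamma_{ba}$ from (\ref{c}). The only nontrivial point is antisymmetry under $i \leftrightarrow s$, which follows from a short direct rearrangement using only (\ref{c}) and yields $T_{sjki} = -T_{ijks}$. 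The cyclic right-hand expression $R_{ijk}$ from (\ref{b}) is totally antisymmetric in its three indices by the same kind of argument.

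With these symmetries in hand, counting the independent constraints reduces to counting components of totally antisymmetric $(0,4)$- and $(0,3)$-tensors on $\mathbb{R}^n$, namely $\binom{n}{4}$ and $\binom{n}{3}$ respectively. For $n=3$ this gives $\binom{3}{4}=0$ equations from the left-hand side and $\binom{3}{3}=1$ from the right-hand side, consistent with the claim of a single equation. For $n=4$ it gives $\binom{4}{4}=1$ from the left-hand side and $\binom{4}{3}=4$ from the right-hand side, which sums to the five equations appearing in the statement. The main obstacle, insofar as there is one, is the verification of total antisymmetry of $T_{ijks}$; everything else is elementary dimension counting in exterior algebra.
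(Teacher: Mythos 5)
Your proposal is correct: the paper states Corollary~\ref{co:yyy} without proof, and your exterior-algebra reformulation ($\sum_\gamma\psi^\gamma\wedge\psi^\gamma=0$ as a $4$-form, $\sum_\gamma\omega^\gamma\wedge\psi^\gamma=0$ as a $3$-form) is precisely the counting argument the authors leave implicit, with the key step --- total antisymmetry of $T_{ijks}$ under $i\leftrightarrow s$, which follows from (\ref{c}) by the rearrangement you indicate --- checking out. The resulting counts $\binom{n}{4}+\binom{n}{3}$ give $0+1$ for $n=3$ and $1+4$ for $n=4$, matching the statement exactly.
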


\begin{corollary}
\label{co:xxx} 
\begin{equation*}
\det g_{ik}=(\det \psi _{m}^{\gamma })^{2}.
\end{equation*}
\end{corollary}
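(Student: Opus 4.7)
The plan is to read off Corollary \ref{co:xxx} directly from the factorized form of the metric established in Theorem \ref{th:zzz}. By that theorem, the covariant metric coefficients satisfy
\begin{equation*}
g_{mk}=\sum_{\gamma=1}^{n}\psi_{m}^{\gamma}\psi_{k}^{\gamma}.
\end{equation*}
The first step is to recognize this sum as a matrix product: viewing $\Psi=(\psi_{m}^{\gamma})$ as an $n\times n$ matrix whose $(m,\gamma)$-entry is $\psi_{m}^{\gamma}$, the above expression is exactly the $(m,k)$-entry of $\Psi\,\Psi^{T}$. In other words, as matrices one has the factorization $g=\Psi\Psi^{T}$.

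The second and final step is to apply the multiplicativity of the determinant together with $\det\Psi^{T}=\det\Psi$, obtaining
\begin{equation*}
\det g_{ik}=\det(\Psi\Psi^{T})=(\det\Psi)(\det\Psi^{T})=(\det\psi_{m}^{\gamma})^{2},
\end{equation*}
which is the claimed identity.

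There is essentially no obstacle here: the only thing to verify is the bookkeeping of indices to make sure that $\Psi$ is indeed square (both the spatial index $m$ and the auxiliary index $\gamma$ run from $1$ to $n$, as assumed in Theorem \ref{th:zzz} where the non-degeneracy of $\psi$ is already invoked). In particular, the nondegeneracy hypothesis on $\mathbf{\psi}$ implicitly guarantees that $\det\Psi\neq 0$, so the identity also confirms that the Monge metric $g_{ik}$ is nondegenerate, consistent with the assumption $\det g_{ij}^{2}\neq 0$ made in Section \ref{sec:struct-second-hamilt}.
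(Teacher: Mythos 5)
Your proof is correct and is exactly the argument the paper intends: the corollary is an immediate consequence of the decomposition $g_{mk}=\sum_{\gamma}\psi_{m}^{\gamma}\psi_{k}^{\gamma}$ of Theorem~\ref{th:zzz}, read as the matrix identity $g=\Psi\Psi^{T}$, combined with multiplicativity of the determinant. The paper leaves this step implicit (no proof is printed), so your write-up simply makes the same one-line computation explicit.
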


\subsection{Decomposition of the Monge metric}

In this Subsection we present an effective approach for the metric
decomposition (\ref{tri}). The linear system of PDEs 
\begin{equation}
\psi _{j,k}=\frac{1}{3}\psi _{p}g^{pq}(g_{qj,k}-g_{qk,j})  \label{d}
\end{equation}
on $n$ functions $\psi _{k}(\mathbf{a})$ can be interpreted as $n$ commuting
linear systems of ODEs for each fixed index $k$. Thus this linear system
possesses a general solution parametrized by $n$ arbitrary constants only.

It is easy to see that $(\psi _{j,k})_{m}=(\psi _{j,m})_{k}=0$. This means
that $\psi _{k}$ are \textit{linear} functions with respect to field
variables $a^{s}$.

Under a geometric viewpoint, the system \eqref{d} is equivalent to $\nabla
\psi =0$, where $\nabla $ is the linear connection determined by $c_{ijk}$.
Since the nonlinear condition of the Potemin system is just the requirement
of flatness of $\nabla $, the system \eqref{d} is just the equation of
parallel vectors of $\nabla $. Such an equation always admits $n$
independent solutions which, in our case, are linear, thus making their
search particularly simple.

Let us choose $n$ particular solutions $\psi _{k}^{\gamma }$ of this system%
\begin{equation*}
\psi _{j,k}^{\gamma }=\frac{1}{3}\psi _{p}^{\gamma }g^{pq}(g_{qj,k}-g_{qk,j})
\end{equation*}%
such that $\det \psi _{p}^{\gamma }\neq 0$. Then taking into account the
constraints (\ref{cycle}), the metric coefficients can be decomposed in the
form (\ref{tri}), (\ref{four}). The antisymmetric condition (\ref{c}) can be
easily obtained from the above linear system by permutation of the indices $%
j,k$. Since $\psi _{k}^{\gamma }$ are linear functions of $a^{k}$, then one
can identify $\psi _{j,k}^{\gamma }=\psi _{jk}^{\gamma }$ which are
skewsymmetric constants with respect to lower indices $j,k$.

\begin{remark}
For better convenience the metric decomposition formulas (\ref{tri}), (\ref%
{four}) can be slightly modified in the following way. For any Monge metric
satisfying the Potemin system \eqref{eq:19}, \eqref{eq:20}, \eqref{eq:21}
one can choose any new linear combination of elementary solutions $\psi
_{p}^{\gamma }$, because (\ref{d}) is a linear system. Thus once some $n$
particular solutions determine a non-degenerate matrix $\mathbf{\psi }$, one
can introduce the constant non-degenerate matrix $\mathbf{\phi }$ such that
(cf. (\ref{tri}))%
\begin{equation}
g_{ij}=\phi _{\beta \gamma }\psi _{i}^{\beta }\psi _{j}^{\gamma },
\label{gen}
\end{equation}%
where $\phi _{\beta \gamma }$ are elements of the matrix $\mathbf{\phi }$
and $\psi _{p}^{\gamma }$ are elements of the matrix $\mathbf{\psi }$.
Indeed, for any symmetric constant matrix $\mathbf{\phi}$ we have $\mathbf{%
\phi =J\Lambda J}^{T}$, where $\mathbf{\Lambda }$ is a diagonal matrix and $%
\mathbf{J}$ is an appropriate constant matrix. Introducing the new set of
particular solutions $\mathbf{\tilde{\psi}=J\psi }$, the above formula $%
g_{ij}=\phi _{\beta \gamma }\psi _{i}^{\beta }\psi _{j}^{\gamma }$ reduces
to the form $g_{ij}=\lambda _{\beta }\delta _{\beta \gamma }\tilde{\psi}%
_{i}^{\beta }\tilde{\psi} _{j}^{\gamma }$, where $\lambda _{\beta }$ are
diagonal elements of the matrix $\mathbf{\Lambda }$. Finally, by
appropriately scaling $\tilde{\psi} _{i}^{\beta }$, one can obtain again the
original formula (\ref{tri}).

In this extended construction the connection coefficients (\ref{pjat})
become $c_{ijk}=-\phi _{\beta \gamma }\psi _{i}^{\beta }\psi _{jk}^{\gamma }$%
, the skew-symmetry condition (\ref{c}) is the same, but (\ref{b}) takes the
form%
\begin{equation}
\phi _{\beta \gamma }(\psi _{is}^{\beta }\psi _{jk}^{\gamma }+\psi
_{js}^{\beta }\psi _{ki}^{\gamma }+\psi _{ks}^{\beta }\psi _{ij}^{\gamma
})=0,\text{ \ }\phi _{\beta \gamma }(\omega _{i}^{\beta }\psi _{jk}^{\gamma
}+\omega _{j}^{\beta }\psi _{ki}^{\gamma }+\omega _{k}^{\beta }\psi
_{ij}^{\gamma })=0.  \label{seven}
\end{equation}%
Since the matrices $\mathbf{\phi }$ and $\mathbf{\psi }$ are non-degenerate,
the inverse metric (cf. (\ref{four}))%
\begin{equation}
g^{ij}=\phi ^{\beta \gamma }\psi _{\beta }^{i}\psi _{\gamma }^{j}
\label{chet}
\end{equation}%
can be easily reconstructed.
\end{remark}

\subsection{Factorised Third Order Homogeneous Hamiltonian Structure}

\label{sec:fact-third-order}

Once a metric decomposition is found, the corresponding Dubrovin--Novikov
type differential-geometric third order Poisson bracket (see (\ref{casimir}%
)) 
\begin{equation*}
\{a^{i}(x),a^{j}(x^{\prime })\}_{2}=\partial _{x}(g^{ij}\partial
_{x}+c_{k}^{ij}u_{x}^{k})\delta ^{\prime }(x-x^{\prime })
\end{equation*}%
also can be written in the factorised form%
\begin{equation*}
\{a^{i}(x),a^{j}(x^{\prime })\}_{2}=\phi ^{\beta \gamma }\partial _{x}\psi
_{\beta }^{i}\partial _{x}\psi _{\gamma }^{j}\delta ^{\prime }(x-x^{\prime
}).
\end{equation*}%
This means that any evolutionary system equipped by a Dubrovin--Novikov type
third order Hamiltonian structure can be written in the form%
\begin{equation*}
a_{t}^{i}=\partial _{x}(g^{is}\partial _{x}+c_{k}^{is}a_{x}^{k})\partial _{x}%
\frac{\delta \mathbf{H}}{\delta a^{s}}=\phi ^{\beta \gamma }\partial
_{x}\psi _{\beta }^{i}\partial _{x}\psi _{\gamma }^{s}\partial _{x}\frac{%
\delta \mathbf{H}}{\delta a^{s}}.
\end{equation*}%
Thus a Dubrovin--Novikov type third order Hamiltonian operator (see (\ref%
{gen}))%
\begin{equation}
A_{2}^{ij}=\phi ^{\beta \gamma }\partial _{x}\psi _{\beta }^{i}\partial
_{x}\psi _{\gamma }^{j}\partial _{x}  \label{s}
\end{equation}%
can be found together with metric decomposition formulae (\ref{gen}), (\ref%
{chet}) in our six-component case (see next Section).

Below we provide explicit expressions for \textit{nonlocal} Casimirs and the
momentum density; then we reconstruct Hamiltonians for non-diagonalizable
hydrodynamic type systems equipped by a Dubrovin--Novikov type third order
Hamiltonian operator (\ref%
{s}).

\subsection{Nonlocal Casimirs}

\label{sec:casimirs}

Under the potential substitution $a^{i}=b_{x}^{i}$ the evolutionary
conservative system%
\begin{equation}
a_{t}^{i}=(V^{i}(\mathbf{a},\mathbf{a}_{x},\mathbf{a}_{xx},\mathbf{...}))_{x}
\label{g}
\end{equation}%
takes the form%
\begin{equation}
b_{t}^{i}=V^{i}(\mathbf{b}_{x},\mathbf{b}_{xx},\mathbf{...}).  \label{f}
\end{equation}%
Correspondingly, if this evolutionary system has a Dubrovin--Novikov type
third order Hamiltonian structure%
\begin{equation*}
a_{t}^{i}=\partial _{x}(g^{is}\partial _{x}+c_{k}^{is}a_{x}^{k})\partial _{x}%
\frac{\delta \mathbf{H}}{\delta a^{s}}=\phi ^{\beta \gamma }\partial
_{x}\psi _{\beta }^{i}\partial _{x}\psi _{\gamma }^{s}\partial _{x}\frac{%
\delta \mathbf{H}}{\delta a^{s}},
\end{equation*}%
then in potential variables $b^{i}$ we see that evolutionary system (\ref{f}%
) has a local Hamiltonian structure of \textit{first} order%
\begin{equation}
b_{t}^{i}=-(g^{is}\partial _{x}+c_{k}^{is}b_{xx}^{k})\frac{\delta \mathbf{H}%
}{\delta b^{s}}=-\phi ^{\beta \gamma }\psi _{\beta }^{i}\partial _{x}\psi
_{\gamma }^{s}\frac{\delta \mathbf{H}}{\delta b^{s}},  \label{bi}
\end{equation}%
but this is \textbf{not} a Dubrovin--Novikov type \textit{first} order
Hamiltonian structure, because its coefficients $g^{is}(\mathbf{b}_{x})$, $%
c_{k}^{is}(\mathbf{b}_{x})$, $\psi _{\beta }^{i}(\mathbf{b}_{x})$ have no
geometrical interpretation (i.e. they cannot change under arbitrary point
transformations $\tilde{b}^{i}(\mathbf{b})$ as components of some tensors).

Field variables $a^{k}$ are Casimirs densities. However, the crucial
difference between a Dubrovin--Novikov type first order Hamiltonian
structure and Dubrovin--Novikov type third order Hamiltonian structure is
existence of $n$ extra Casimirs. Such an observation was first made in \cite%
{FGMN} where they were found for the three-component non-diagonalizable
hydrodynamic type system $a_{t}^{1}=a_{x}^{2}$, $a_{t}^{2}=a_{x}^{3}$, $%
a_{t}^{3}=[(a^{2})^{2}-a^{1}a^{3}]_{x}$. Under the potential substitution $%
a^{k}=b_{x}^{k}$ this hydrodynamic type system becomes $b_{t}^{1}=b_{x}^{2}$%
, $b_{t}^{2}=b_{x}^{3}$, $b_{t}^{3}=(b_{x}^{2})^{2}-b_{x}^{1}b_{x}^{3}$.
However, one can choose three new Casimirs $\mathbf{S}^{\beta }=\int
s^{\beta }dx$ such that $s^{1}=b^{1},s^{2}=b^{2},s^{3}=b^{3}+b^{2}b_{x}^{1}$%
. Then the above nonlinear system reduces again to the conservative form $%
s_{t}^{1}=s_{x}^{2}$, $s_{t}^{2}=(s^{3}-s^{2}s_{x}^{1})_{x}$, $%
s_{t}^{3}=(s^{2}s_{x}^{2})_{x}$. Obviously the inverse transformation is $%
b^{1}=s^{1}$, $b^{2}=s^{2}$, $b^{3}=s^{3}-s^{2}s_{x}^{1}$.

Casimirs $\mathbf{S}^{\beta }=\int s^{\beta }dx$ generate \textquotedblleft
zeroth\textquotedblright\ flows, i.e.%
\begin{equation}
0=(g^{is}\partial _{x}+c_{k}^{is}b_{xx}^{k})\frac{\delta \mathbf{S}^{\alpha }%
}{\delta b^{s}}=\phi ^{\beta \gamma }\psi _{\beta }^{i}\partial _{x}\psi
_{\gamma }^{s}\frac{\delta \mathbf{S}^{\alpha }}{\delta b^{s}}.  \label{zero}
\end{equation}

\begin{theorem}
\label{th:vvv} The following functionals determine $n$ nonlocal Casimirs: 
\begin{equation}
\mathbf{S}^{\alpha }=\int \left( \frac{1}{2}\psi _{mk}^{\alpha
}b_{x}^{k}+\omega _{m}^{\alpha }\right) b^{m}dx.  \label{kazimir}
\end{equation}
\end{theorem}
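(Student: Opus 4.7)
The plan is to pass to the potential variables $b^i$ with $a^i=b_x^i$, where the third-order Hamiltonian operator $\hat{A}_2$ becomes the first-order operator in the factorised form (\ref{bi}), and then verify the Casimir equation (\ref{zero}) by a direct computation of the variational derivative of $\mathbf{S}^{\alpha}$.

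First, I would compute $\delta\mathbf{S}^{\alpha}/\delta b^{j}$ directly from the density
\begin{equation*}
s^{\alpha}=\tfrac{1}{2}\psi_{mk}^{\alpha}b_{x}^{k}b^{m}+\omega_{m}^{\alpha}b^{m}.
\end{equation*}
Since $\partial s^{\alpha}/\partial b^{j}=\tfrac{1}{2}\psi_{jk}^{\alpha}b_{x}^{k}+\omega_{j}^{\alpha}$ and $\partial s^{\alpha}/\partial b_{x}^{j}=\tfrac{1}{2}\psi_{mj}^{\alpha}b^{m}$, taking $-\partial_{x}$ of the latter and adding the former yields
\begin{equation*}
\frac{\delta\mathbf{S}^{\alpha}}{\delta b^{j}}=\tfrac{1}{2}\psi_{jk}^{\alpha}b_{x}^{k}+\omega_{j}^{\alpha}-\tfrac{1}{2}\psi_{mj}^{\alpha}b_{x}^{m}.
\end{equation*}
Applying the skew-symmetry (\ref{c}), the two quadratic terms combine into $\psi_{jk}^{\alpha}b_{x}^{k}$, and recalling $a^{k}=b_{x}^{k}$ together with (\ref{omega}) I obtain the clean identity $\delta\mathbf{S}^{\alpha}/\delta b^{j}=\psi_{j}^{\alpha}$.

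Next I would substitute this into the Casimir condition (\ref{zero}) written using the factorised first-order operator on $b$-variables from (\ref{bi}):
\begin{equation*}
\phi^{\beta\gamma}\psi_{\beta}^{i}\,\partial_{x}\!\left(\psi_{\gamma}^{s}\frac{\delta\mathbf{S}^{\alpha}}{\delta b^{s}}\right)=\phi^{\beta\gamma}\psi_{\beta}^{i}\,\partial_{x}\!\left(\psi_{\gamma}^{s}\psi_{s}^{\alpha}\right).
\end{equation*}
By the duality relation $\psi_{\gamma}^{s}\psi_{s}^{\alpha}=\delta_{\gamma}^{\alpha}$ used in the definition of the inverse matrix (cf.\ the text following (\ref{four})), the argument of $\partial_{x}$ is the constant $\delta_{\gamma}^{\alpha}$, so the whole expression vanishes. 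This is exactly the statement that $\mathbf{S}^{\alpha}$ is a Casimir of $\hat{A}_{2}$. The non-locality in the original variables $a^k$ comes from the explicit $b^m=\partial_x^{-1}a^m$ appearing in the density of $\mathbf{S}^{\alpha}$.

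The main step is really just the variational derivative computation, which is completely forced by the skew-symmetry (\ref{c}); there is no serious obstacle once the factorised form (\ref{bi}) is available, since the rest is an algebraic identity following from $\psi_\gamma^s\psi_s^\alpha=\delta_\gamma^\alpha$. The only thing worth being careful about is that the quadratic piece of $s^\alpha$ is written in the symmetric-looking combination $\tfrac12\psi_{mk}^\alpha b_x^k b^m$ rather than as an exact derivative, which is exactly what is needed for the total-derivative term $-\partial_x(\partial s^\alpha/\partial b_x^j)$ to combine with $\partial s^\alpha/\partial b^j$ into $\psi_j^\alpha$ rather than into zero.
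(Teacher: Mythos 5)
Your proposal is correct and follows essentially the same route as the paper: pass to the potential variables, show via the skew-symmetry (\ref{c}) that $\delta\mathbf{S}^{\alpha}/\delta b^{s}=\psi_{s}^{\alpha}$, and then conclude from $\psi_{\gamma}^{s}\psi_{s}^{\alpha}=\delta_{\gamma}^{\alpha}$ that the factorised operator annihilates it. You merely spell out the variational-derivative computation that the paper leaves implicit.
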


\begin{proof}
Taking into account $\psi _{sm}^{\alpha }=-\psi _{ms}^{\alpha }$,
variational derivatives are (see (\ref{omega})) 
\begin{equation*}
\frac{\delta \mathbf{S}^{\alpha }}{\delta b^{s}}=\psi _{s}^{\alpha }.
\end{equation*}%
Then \textquotedblleft zeroth\textquotedblright\ flows (\ref{zero}) imply%
\begin{equation*}
0=\phi ^{\beta \gamma }\psi _{\beta }^{i}\partial _{x}\psi _{\gamma }^{s}%
\frac{\delta \mathbf{S}^{\alpha }}{\delta b^{s}}=\phi ^{\beta \gamma }\psi
_{\beta }^{i}\partial _{x}\psi _{\gamma }^{s}\psi _{s}^{\alpha }.
\end{equation*}%
Taking into account $\psi _{\gamma }^{s}\psi _{s}^{\alpha }=\delta _{\gamma
}^{\alpha }$, one can see that $\phi ^{\beta \gamma }\psi _{\beta
}^{i}\partial _{x}\delta _{\gamma }^{\alpha }=0$. The Theorem is proved.
\end{proof}

Any Hamiltonian system (\ref{bi}) possesses $n$ conservation laws associated
with Casimirs:%
\begin{equation*}
s_{t}^{\alpha }=\frac{\partial s^{\alpha }}{\partial b^{k}}b_{t}^{k}+\frac{%
\partial s^{\alpha }}{\partial b_{x}^{k}}b_{xt}^{k}=-\left[ \frac{1}{2}\psi
_{mk}^{\alpha }b^{m}\phi ^{\beta \gamma }\psi _{\beta }^{k}\left( \psi
_{\gamma }^{s}\frac{\delta \mathbf{H}}{\delta b^{s}}\right) _{x}+\phi
^{\alpha \gamma }\psi _{\gamma }^{s}\frac{\delta \mathbf{H}}{\delta b^{s}}%
\right] _{x}.
\end{equation*}

\subsection{The Momentum}

\label{sec:momentum}

Any Hamiltonian system (\ref{bi}) has the momentum $\mathbf{P=}\int Pdx$;
this means that\footnote{%
A momentum defines a translation. This
means that we replace the time variable \textquotedblleft $t$%
\textquotedblright\ by the space variable \textquotedblleft $x$%
\textquotedblright\ and simultaneously we replace the Hamiltonian density $h 
$ by the momentum density $P$.} 
\begin{equation*}
b_{x}^{i}=-\phi ^{\beta \gamma }\psi _{\beta }^{i}\partial _{x}\psi _{\gamma
}^{s}\frac{\delta \mathbf{P}}{\delta b^{s}}.
\end{equation*}
Then the momentum density $P$ can be reconstructed, because all variational
derivatives are known: 
\begin{equation}
\frac{\delta \mathbf{P}}{\delta b^{k}}=-\phi _{\beta \gamma }\psi
_{k}^{\beta }\partial _{x}^{-1}\psi _{m}^{\gamma }b_{x}^{m}.  \label{pi}
\end{equation}

\begin{theorem}
\label{th:uuu} The momentum of the Hamiltonian system (\ref{bi}) is 
\begin{equation}
\mathbf{P}=-\int \left( \frac{1}{3}\phi _{\beta \gamma }\omega _{q}^{\beta
}\psi _{pm}^{\gamma }b_{x}^{m}+\frac{1}{2}\phi _{\beta \gamma }\omega
_{p}^{\beta }\omega _{q}^{\gamma }\right) b^{p}b^{q}dx.  \label{pik}
\end{equation}
\end{theorem}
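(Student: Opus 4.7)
The plan is to verify directly that the functional $\mathbf{P}$ given by~(\ref{pik}) reproduces the prescribed variational derivative~(\ref{pi}). The first step is to simplify the right-hand side of~(\ref{pi}) by computing $\partial_{x}^{-1}(\psi_{m}^{\gamma}b_{x}^{m})$. Substituting $a^m=b_x^m$ into~(\ref{omega}) gives
\begin{equation*}
\psi_{m}^{\gamma}b_{x}^{m}
=\psi_{ml}^{\gamma}b_{x}^{l}b_{x}^{m}+\omega_{m}^{\gamma}b_{x}^{m};
\end{equation*}
the first term vanishes by the skew-symmetry~(\ref{c}) of $\psi_{ml}^{\gamma}$ in $(m,l)$, and the second is exactly $\partial_{x}(\omega_{m}^{\gamma}b^{m})$. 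Hence~(\ref{pi}) collapses to the local expression
\begin{equation*}
\frac{\delta \mathbf{P}}{\delta b^{k}}
=-\phi_{\beta\gamma}\,\psi_{k}^{\beta}\,\omega_{m}^{\gamma}b^{m}
=-\phi_{\beta\gamma}\,\psi_{kl}^{\beta}\omega_{m}^{\gamma}b_{x}^{l}b^{m}
 -\phi_{\beta\gamma}\,\omega_{k}^{\beta}\omega_{m}^{\gamma}b^{m}.
\end{equation*}

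The second step is to compute the variational derivative of the Ansatz~(\ref{pik}) and show that it coincides with the expression above. The quadratic-in-$b$ piece $-\tfrac12\phi_{\beta\gamma}\omega_{p}^{\beta}\omega_{q}^{\gamma}b^{p}b^{q}$ is symmetric in $(p,q)$ since $\phi$ is symmetric, so its variational derivative is immediately $-\phi_{\beta\gamma}\omega_{k}^{\beta}\omega_{m}^{\gamma}b^{m}$, matching the second term. For the cubic piece
\begin{equation*}
\tilde P=-\tfrac{1}{3}\phi_{\beta\gamma}\omega_{q}^{\beta}\psi_{pm}^{\gamma}b_{x}^{m}b^{p}b^{q},
\end{equation*}
I would introduce the shorthand $M_{abc}=\phi_{\beta\gamma}\omega_{c}^{\beta}\psi_{ab}^{\gamma}$, which is skew in $(a,b)$ by~(\ref{c}) and satisfies the cyclic identity $M_{jki}+M_{kij}+M_{ijk}=0$ by virtue of the second relation in~(\ref{seven}). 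A direct computation of $\partial\tilde P/\partial b^{k}-\partial_{x}(\partial\tilde P/\partial b_{x}^{k})$ produces four $M$-tensor terms contracted with $b_{x}^{s}b^{l}$.

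The main (and essentially only) obstacle is the algebraic step of collecting these four contributions and showing they equal $-M_{ksl}b_{x}^{s}b^{l}$, which is the missing first term in the target variational derivative. This reduces, after using the skew-symmetry of $M$ in its first two indices, to the cyclic identity satisfied by $M$; concretely, one rewrites $M_{kls}=-M_{lks}=M_{ksl}+M_{slk}$, and a short manipulation confirms the equality. Because the cyclic identity~(\ref{seven}) is a direct consequence of the Potemin system~(\ref{eq:20}), this finishes the verification. No Schouten-bracket or Jacobi-type arguments are needed: once the nonlocal $\partial_{x}^{-1}$ is explicitly resolved, the theorem becomes a routine, albeit combinatorial, verification tying together the skew-symmetry~(\ref{c}) and the cyclic condition~(\ref{seven}).
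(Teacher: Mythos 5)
Your proposal is correct and follows essentially the same route as the paper's proof: resolve the $\partial_x^{-1}$ in~(\ref{pi}) using the skew-symmetry~(\ref{c}) so that the variational derivative becomes local, then verify that the variational derivative of the Ansatz~(\ref{pik}) matches, invoking the right-hand relation of~(\ref{seven}). You merely spell out the combinatorial step with the tensor $M_{abc}$ that the paper leaves implicit.
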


\begin{proof}
Taking into account (\ref{omega}) and $\psi _{sm}^{\alpha }=-\psi
_{ms}^{\alpha }$ variational derivatives (\ref{pi}) reduce to the form%
\begin{equation*}
\frac{\delta \mathbf{P}}{\delta b^{k}}=-\phi _{\beta \gamma }(\psi
_{ks}^{\beta }b_{x}^{s}+\omega _{k}^{\beta })\partial _{x}^{-1}(\psi
_{mp}^{\gamma }b_{x}^{m}b_{x}^{p}+\omega _{m}^{\gamma }b_{x}^{m})=-\phi
_{\beta \gamma }\psi _{ks}^{\beta }\omega _{m}^{\gamma }b^{m}b_{x}^{s}-\phi
_{\beta \gamma }\omega _{k}^{\beta }\omega _{m}^{\gamma }b^{m}.
\end{equation*}%
Indeed, variational derivatives of (\ref{pik}) coincide with above
expressions, where we utilized the right equation from (\ref{seven}). The
Theorem is proved.
\end{proof}

Any Hamiltonian system (\ref{bi}) possesses the conservation law of momentum%
\begin{equation*}
P_{t}=\frac{\partial P}{\partial b^{k}}b_{t}^{k}+\frac{\partial P}{\partial
b_{x}^{k}}b_{xt}^{k}=\left[ b^{m}\omega _{m}^{\beta }\psi _{\beta }^{s}\frac{%
\delta \mathbf{H}}{\delta b^{s}}-\frac{1}{3}\phi _{\beta \gamma }\omega
_{q}^{\beta }b^{q}\psi _{km}^{\gamma }b^{m}\phi ^{\alpha \delta }\psi
_{\alpha }^{k}\left( \psi _{\delta }^{s}\frac{\delta \mathbf{H}}{\delta b^{s}%
}\right) _{x}-Q\right] _{x},
\end{equation*}%
where $Q$ is a local expression of field variables $b^{k}$ and all their
higher derivatives due to well-known formula: $Q_{x}=\frac{\delta \mathbf{H}%
}{\delta b^{s}}b_{x}^{s}$.

\subsection{The Hamiltonian}

\label{sec:hamiltonian}

In this Section we restrict our considerations on Hamiltonian
non-diagonalizable hydrodynamic type systems only (see \eqref{eq:5} and %
\eqref{eq:10}). In Casimir densities $a^{k}$ they are written in the
conservative form (\ref{aa}). Under the potential substitution $%
a^{i}=b_{x}^{i}$ the hydrodynamic type system (\ref{aa}) becomes %
\eqref{eq:57} (cf. (\ref{g}) and (\ref{f})), which leads to (see (\ref{bi}))%
\begin{equation*}
b_{t}^{i}=v^{i}(\mathbf{b}_{x})=-\phi ^{\beta \gamma }\psi _{\beta
}^{i}\partial _{x}\psi _{\gamma }^{s}\frac{\delta \mathbf{H}}{\delta b^{s}}.
\end{equation*}%
Thus (cf. (\ref{pi}))%
\begin{equation}
\frac{\delta \mathbf{H}}{\delta b^{k}}=-\phi _{\beta \gamma }\psi
_{k}^{\beta }\partial _{x}^{-1}\psi _{m}^{\gamma }v^{m}(\mathbf{b}_{x}).
\label{hama}
\end{equation}%
In general, the expressions $\psi _{m}^{\gamma }v^{m}(\mathbf{b}_{x})=(\psi
_{mk}^{\gamma }b_{x}^{k}+\omega _{m}^{\gamma })v^{m}(\mathbf{b}_{x})$ depend
nonlinearly on $b_{x}^{k}$ only. However, in this paper we restrict our
consideration to the case where these expressions depend on $b_{x}^{k}$
\textit{linearly}, i.e. $\psi _{m}^{\gamma }v^{m}(\mathbf{b}_{x})=\eta
_{m}^{\gamma }b_{x}^{m}$, where $\eta _{m}^{\gamma }$ are constant matrices.

\begin{theorem}
  \label{th:aaa} The non-diagonalizable Hamiltonian hydrodynamic type
  systems~\eqref{eq:10} have the Hamiltonians%
\begin{equation}
\mathbf{H}=\frac{1}{2}\int (\zeta _{pqm}b_{x}^{m}-\phi _{\beta \gamma
}\omega _{p}^{\beta }\eta _{q}^{\gamma })b^{p}b^{q}dx,  \label{ham}
\end{equation}%
where $\zeta _{pqm}$ are symmetric constant matrices with respect to first
two indices $p,q$ such that%
\begin{equation}
\zeta _{kpq}=\frac{1}{3}\phi _{\beta \gamma }(\psi _{kp}^{\beta }\eta
_{q}^{\gamma }+2\psi _{qk}^{\beta }\eta _{p}^{\gamma }),  \label{kpq}
\end{equation}%
where the constant matrix $\eta _{q}^{\gamma }$ must satisfy the set of
constraints
\begin{equation}
\phi _{\beta \gamma }(\psi _{qp}^{\beta }\eta _{k}^{\gamma }+\psi
_{kq}^{\beta }\eta _{p}^{\gamma }+\psi _{pk}^{\beta }\eta _{q}^{\gamma
})=0,\quad
\phi _{\beta \gamma }(\omega_{p}^{\beta }\eta_{q}^{\gamma }
- \omega_{q}^{\beta }\eta _{p}^{\gamma }) = 0.
\label{very}
\end{equation}
\end{theorem}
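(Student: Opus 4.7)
The plan is to mirror the proof of Theorem~\ref{th:uuu}: turn the nonlocal variational derivative supplied by (\ref{hama}) into an explicit local expression under the linearity hypothesis, and then verify that this matches the variational derivative computed directly from the proposed Hamiltonian (\ref{ham}), with the constraints (\ref{very}) and the formula (\ref{kpq}) appearing exactly where the index-matching demands them.

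First, I substitute the linearity assumption $\psi_m^\gamma v^m(\mathbf{b}_x)=\eta_m^\gamma b_x^m$ into (\ref{hama}). Applying $\partial_x^{-1}$ (and dropping an integration constant, which only shifts by a Casimir) gives $\partial_x^{-1}(\eta_m^\gamma b_x^m)=\eta_m^\gamma b^m$. Expanding $\psi_k^\beta=\psi_{ks}^\beta b_x^s+\omega_k^\beta$ as in (\ref{omega}), I obtain the target local expression
\begin{equation*}
\frac{\delta \mathbf{H}}{\delta b^k}=-\phi_{\beta\gamma}\psi_{ks}^\beta\eta_m^\gamma\, b^m b_x^s-\phi_{\beta\gamma}\omega_k^\beta\eta_m^\gamma\, b^m.
\end{equation*}
Before computing the variational derivative of (\ref{ham}), I check that its integrand defines a legitimate quadratic form in $b^p,b^q$: the constant piece $\phi_{\beta\gamma}\omega_p^\beta\eta_q^\gamma$ is symmetric in $p,q$ precisely by the second equation of (\ref{very}), while the symmetry of $\zeta_{pqm}$ in its first two indices is verified by substituting (\ref{kpq}) and using the skew-symmetry (\ref{c}) together with the first equation of (\ref{very}).

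Next, I compute $\delta\mathbf{H}/\delta b^k$ directly from (\ref{ham}). The constant quadratic piece contributes $-\tfrac12\phi_{\beta\gamma}(\omega_k^\beta\eta_m^\gamma+\omega_m^\beta\eta_k^\gamma)b^m$, which collapses to $-\phi_{\beta\gamma}\omega_k^\beta\eta_m^\gamma b^m$ by the second equation of (\ref{very}); this matches the second term above. For the cubic piece $\tfrac12\zeta_{pqm}b^p b^q b_x^m$, a standard computation (using symmetry in the first two indices of $\zeta$ and integrating by parts the $\partial_x$ coming from differentiating in $b_x^k$) gives $(\zeta_{kms}-\zeta_{msk})b^m b_x^s$. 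I then substitute (\ref{kpq}), obtaining
\begin{equation*}
\zeta_{kms}-\zeta_{msk}=\tfrac{1}{3}\phi_{\beta\gamma}\bigl(-\psi_{km}^\beta\eta_s^\gamma+2\psi_{sk}^\beta\eta_m^\gamma-\psi_{ms}^\beta\eta_k^\gamma\bigr).
\end{equation*}
The first equation of (\ref{very}), combined with the skew-symmetry (\ref{c}), is exactly the identity
$\phi_{\beta\gamma}(\psi_{km}^\beta\eta_s^\gamma+\psi_{sk}^\beta\eta_m^\gamma+\psi_{ms}^\beta\eta_k^\gamma)=0$, which reduces the above to $\phi_{\beta\gamma}\psi_{sk}^\beta\eta_m^\gamma=-\phi_{\beta\gamma}\psi_{ks}^\beta\eta_m^\gamma$; this matches the first term above.

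The main obstacle, and the step that dictates the precise form of the coefficients in (\ref{kpq}), is the cubic-piece calculation: one has to arrange the split between $\psi_{kp}^\beta\eta_q^\gamma$ and $2\psi_{qk}^\beta\eta_p^\gamma$, together with the factor $\tfrac13$, so that the antisymmetrization $\zeta_{kms}-\zeta_{msk}$ produces exactly a cyclic triple whose vanishing is guaranteed by (\ref{very}). Everything else is bookkeeping: verifying symmetry, skew-symmetry, and that the integration-by-parts constants can be absorbed. Once the cyclic identity is invoked, the two expressions for $\delta\mathbf{H}/\delta b^k$ coincide, proving that the functional (\ref{ham}) is the Hamiltonian of the WDVV hydrodynamic type systems with respect to the factorized operator (\ref{s}).
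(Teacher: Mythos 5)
Your proposal is correct and follows essentially the same route as the paper's proof: reduce the variational derivative (\ref{hama}) to a local expression via the linearity assumption $\psi_m^\gamma v^m(\mathbf{b}_x)=\eta_m^\gamma b_x^m$, compute $\delta\mathbf{H}/\delta b^k$ directly from (\ref{ham}), and match the two using the constraints (\ref{very}). The only difference is that you spell out the verification that (\ref{kpq}) is symmetric in its first two indices and solves $\zeta_{kpq}-\zeta_{pqk}=-\phi_{\beta\gamma}\psi_{kq}^{\beta}\eta_{p}^{\gamma}$ via the cyclic identity, which the paper leaves to the reader.
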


\begin{proof}
Taking into account $\psi _{m}^{\gamma }v^{m}(\mathbf{b}_{x})=\eta
_{m}^{\gamma }b_{x}^{m}$ and (\ref{omega}), the variational derivatives (\ref%
{hama}) reduce to the form
\begin{equation*}
\frac{\delta \mathbf{H}}{\delta b^{k}}=-\phi _{\beta \gamma }\psi
_{kq}^{\beta }\eta _{p}^{\gamma }b^{p}b_{x}^{q}-\phi _{\beta \gamma }\omega
_{k}^{\beta }\eta _{m}^{\gamma }b^{m}.
\end{equation*}%
On the other hand, the variational derivatives of (\ref{ham}) are 
\begin{equation*}
\frac{\delta \mathbf{H}}{\delta b^{k}}=(\zeta _{kpq}-\zeta
_{pqk})b^{p}b_{x}^{q}-
\frac{1}{2}\phi _{\beta \gamma }(\omega _{k}^{\beta }\eta
_{m}^{\gamma } + \omega _{m}^{\beta }\eta
_{k}^{\gamma })b^{m}.
\end{equation*}%
Comparing both the above r.h.s. expressions we obtain the system of linear
algebraic equations 
\begin{equation*}
\zeta _{kpq}-\zeta _{pqk}=-\phi _{\beta \gamma }\psi _{kq}^{\beta }\eta
_{p}^{\gamma },
\end{equation*}%
Taking into account set of constraints\footnote{%
cf. (\ref{very}) and the second set of constraints in (\ref{seven}).} (\ref%
{very}), one can verify that solution (\ref{kpq}) of the above system is
symmetric with respect to the first indices $p,q$. The Theorem is proved.
\end{proof}

\begin{remark}
The solution (\ref{kpq}) is determined up to a total derivative with respect
to \textquotedblleft $x$\textquotedblright . One can remove auxiliary
elements introducing constants $\zeta _{kpq}$ in an alternative more
effective way 
\begin{equation*}
\zeta _{kkm}=0,\text{ \ }\zeta _{kmk}=\zeta _{pkk}=\phi _{\beta \gamma }\psi
_{km}^{\beta }\eta _{k}^{\gamma },\text{ \ }\zeta _{pqk}=\frac{1}{3}\phi
_{\beta \gamma }(\psi _{kq}^{\beta }\eta _{p}^{\gamma }-\psi _{pk}^{\beta
}\eta _{q}^{\gamma }).
\end{equation*}%
\emph{i.e.} the corresponding Hamiltonian $\mathbf{H}$ in such a case
contains less number of terms.
\end{remark}

Any Hamiltonian system (\ref{bi}) possesses the conservation law of energy.
For instance, in the case of hydrodynamic type system (\ref{aa}), one can
obtain (see \eqref{eq:57} and (\ref{ham}) where $\mathbf{H}=\int h(\mathbf{b}%
,\mathbf{b}_{x})dx$)%
\begin{equation*}
h_{t}=\frac{\partial h}{\partial b^{k}}b_{t}^{k}+\frac{\partial h}{\partial
b_{x}^{k}}b_{xt}^{k}=-\left( \frac{\partial h}{\partial b_{x}^{k}}\phi
^{\beta \gamma }\psi _{\beta }^{k}\partial _{x}\psi _{\gamma }^{s}\frac{%
\delta \mathbf{H}}{\delta b^{s}}+\frac{1}{2}g^{ks}(\mathbf{a})\frac{\delta 
\mathbf{H}}{\delta b^{k}}\frac{\delta \mathbf{H}}{\delta b^{s}}\right) _{x}.
\end{equation*}

\section{Second Hamiltonian Structure for the Six-Component WDVV
Hydrodynamic Type System}

\label{sec:second-hamilt-struct}

Reconstructing a third-order Hamiltonian operator $\hat{A}_{2}$ from its
leading term is a very complicated task. Indeed, we know that $\hat{A}_{2}$
is completely determined by its metric coefficients $g^{ik}(\mathbf{a})$ in
the canonical form \eqref{casimir} with respect to its Casimirs. But, if
Casimirs are unknown, then the operator will contain many `spurious' terms
which come from the choice of the coordinate system, like $d_{k}^{ij}$, $%
d_{km}^{ij}$, \dots and whose determination is also a nontrivial task.

However, in our case the `initial' coordinates $a^{k}$ are all hydrodynamic
conservation law densities, i.e. $a^{k}$ depend just on flat coordinates but
not on their higher derivatives. Then we can conjecture that $a^{k}$ are
Casimirs densities for the operator $\hat{A}_{2}$, as it happened in
known examples in $N=3$. This is indeed the case. So, let us interpret the
matrix $g_{is}(\mathbf{u})$ from Corollary~\ref{co:leadingterm} as the
matrix of a covariant pseudo-Riemannian metric.

\begin{theorem}
\label{th:second-hamilt-struct} The metric $g_{is}(\mathbf{u})$ determined
by \eqref{eq:30} is transformed from the coordinates $u^{k}$ to the
coordinates $a^{k}(\mathbf{u})$ as%
\begin{equation*}
g_{ik}(\mathbf{a})=%
\begin{pmatrix}
(a^{4})^{2} & -2a^{5} & 2a^{4} & -(a^{1}a^{4}+a^{3}) & a^{2} & 1 \\ 
-2a^{5} & -2a^{3} & a^{2} & 0 & a^{1} & 0 \\ 
2a^{4} & a^{2} & 2 & -a^{1} & 0 & 0 \\ 
-(a^{1}a^{4}+a^{3}) & 0 & -a^{1} & (a^{1})^{2} & 0 & 0 \\ 
a^{2} & a^{1} & 0 & 0 & 0 & 0 \\ 
1 & 0 & 0 & 0 & 0 & 0%
\end{pmatrix}%
\end{equation*}%
The metric $g_{ik}(\mathbf{a})$ is a Monge metric satisfying Potemin system
\eqref{eq:20}, (\ref{q}) and it generates a Dubrovin--Novikov type
third-order Hamiltonian operator $\hat{A}_{2}$ in canonical form~%
\eqref{casimir}.

The Monge metric $g_{ij}(\mathbf{a})$ admits the decomposition~(\ref{gen}),
where\footnote{In the matrix $P=(\psi^\beta_i)$ the upper index $\beta$ is a
  column index, so that if $\mathbf{\Phi}=(\phi_{\alpha\beta})$ and
  $\mathbf{g}=(g_{ij})$ we can rewrite~(\ref{gen}) as $\mathbf{g}=P\Phi P^T$
  and~\eqref{bi} as  $\mathbf{b}_t = - (P^T)^{-1}\mathbf{\Phi^{-1}}\partial_x
  P^{-1} \delta\mathbf{H}/\delta\mathbf{u}$.}
\begin{equation}
\psi _{i}^{\gamma }=%
\begin{pmatrix}
1 & a^{5} & a^{4} & 0 & 0 & 0 \\ 
0 & a^{3} & 0 & 1 & a^{5} & 0 \\ 
0 & -a^{2} & 0 & 0 & -a^{4} & 1 \\ 
0 & 0 & -a^{1} & 0 & a^{3} & 0 \\ 
0 & -a^{1} & 0 & 0 & -a^{2} & 0 \\ 
0 & 0 & 0 & 0 & -1 & 0%
\end{pmatrix}%
,\quad \phi _{\beta \gamma }=%
\begin{pmatrix}
0 & 0 & 0 & 0 & -1 & 0 \\ 
0 & 0 & 0 & -1 & 0 & 0 \\ 
0 & 0 & 1 & 0 & 0 & 1 \\ 
0 & -1 & 0 & 0 & 0 & 0 \\ 
-1 & 0 & 0 & 0 & 0 & 0 \\ 
0 & 0 & 1 & 0 & 0 & 2%
\end{pmatrix}%
.  \label{eq:37}
\end{equation}%
So the Hamiltonian operator $\hat{A}_{2}$ can be rewritten in the simplified
form (\ref{s}), where inverse matrices are%
\begin{equation*}
\psi _{\gamma }^{i}=\frac{1}{a^{1}}%
\begin{pmatrix}
a^{1} & 0 & 0 & a^{4} & a^{5} & a^{3}a^{4}-a^{2}a^{5}\cr0 & 0 & 0 & 0 & -1 & 
a^{2}\cr0 & 0 & 0 & -1 & 0 & -a^{3}\cr0 & a^{1} & 0 & 0 & a^{3} & 
a^{1}a^{5}-a^{2}a^{3}\cr0 & 0 & 0 & 0 & 0 & -a^{1}\cr0 & 0 & a^{1} & 0 & 
-a^{2} & (a^{2})^{2}-a^{1}a^{4}\cr%
\end{pmatrix}%
,\quad \phi ^{\beta \gamma }=%
\begin{pmatrix}
0 & 0 & 0 & 0 & -1 & 0\cr0 & 0 & 0 & -1 & 0 & 0\cr0 & 0 & 2 & 0 & 0 & -1\cr0
& -1 & 0 & 0 & 0 & 0\cr-1 & 0 & 0 & 0 & 0 & 0\cr0 & 0 & -1 & 0 & 0 & 1%
\end{pmatrix}%
.
\end{equation*}%
In this case $\det \mathbf{\phi }=1$, $\det \mathbf{g}=(a^{1})^{4}$, $\det 
\mathbf{\psi }=-(a^{1})^{2}$.

Each of the hydrodynamic type systems \eqref{eq:10} admits a Hamiltonian
formulation by means of $\hat{A}_{2}$, with nonlocal Hamiltonian density $%
\tilde{h}_{k}(\mathbf{b},\mathbf{b}_{x})$, respectively (see (\ref{ham}))%
\begin{align}
& \tilde{h}
_{1}=-b^{4}b^{5}b_{x}^{1}-b^{5}b^{2}b_{x}^{2}+b^{2}b^{4}b_{x}^{3}-b^{2}b^{6},
\label{eq:28} \\
& \tilde{h}
_{2}=-b^{3}b^{5}b_{x}^{2}+b^{4}b^{3}b_{x}^{3}+b^{1}b^{5}b_{x}^{5}-b^{3}b^{6},
\end{align}%
where $a^{k}=b_{x}^{k}$. Both hydrodynamic type systems \eqref{eq:10} also
have a common momentum $\mathbf{P}=\int Pdx$ and the same set of nonlocal
Casimirs $\mathbf{S}^{k}=\int s^{k}dx$, where (see (\ref{pik}))%
\begin{equation*}
P=-b^{3}b^{2}b_{x}^{2}-b^{1}b^{3}b_{x}^{4}+b^{1}b^{2}b_{x}^{5}-b^{1}b^{6}-(b^{3})^{2},
\end{equation*}%
\begin{equation*}
s^{1}=b^{1},\text{ \ }s^{2}=b^{2},\text{ \ }s^{3}=b^{3},\text{ \ }%
s^{4}=b^{4}b_{x}^{1},\text{ \ }s^{5}=b^{5}b_{x}^{1}+b^{3}b_{x}^{2},\text{ \ }%
s^{6}=b^{5}b_{x}^{2}+b^{3}b_{x}^{4}+b^{6}.
\end{equation*}

The operators $\hat{A}_{1}$, $\hat{A}_{2}$ form a commuting pair, hence the
systems \eqref{eq:10} have a bi-Hamiltonian formulation.
\end{theorem}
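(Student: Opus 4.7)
The plan is to reduce essentially every claim of the theorem to a finite matrix computation, by exploiting the factorization machinery of Section~\ref{sec:potemin} and the representation results for Casimirs, momentum and Hamiltonians developed in its subsections. First I would implement the change of coordinates from the flat coordinates $u^k$ of $\hat{A}_1$ to the candidate Casimirs $a^k$, starting from the expression $g^{ij}(\mathbf{u})$ provided by Corollary~\ref{co:leadingterm} via formula~\eqref{eq:30}. Using the Vi\`ete relations~\eqref{eq:13} to compute the Jacobian $\partial u^m/\partial a^i$, I would apply the tensor transformation $g_{ik}(\mathbf{a}) = g_{ms}(\mathbf{u})\frac{\partial u^m}{\partial a^i}\frac{\partial u^s}{\partial a^k}$, invert, and verify that the resulting $6\times 6$ matrix agrees with the one displayed in the statement. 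This step is necessarily handled by computer algebra because the intermediate expressions in $u^k$ are very large, but the output in $\mathbf{a}$-coordinates is compact.

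Next I would verify the factorization $g_{ik}(\mathbf{a}) = \phi_{\beta\gamma}\psi_i^\beta\psi_k^\gamma$ directly, as a $6\times 6$ matrix product of the data~\eqref{eq:37}. Reading off the constants $\psi_{km}^\gamma$ and $\omega_k^\gamma$ from the linear expressions~\eqref{omega} applied to the given $\psi_i^\gamma$, I would check the skew-symmetry~(\ref{c}) and the two sets of cyclic constraints~(\ref{seven}). Theorem~\ref{th:zzz} (in its extended version with a constant $\phi$, see Remark preceding (\ref{s})) then guarantees, with no further Jacobi-type computation, that $g_{ik}(\mathbf{a})$ is a Monge metric solving the Potemin system~\eqref{eq:20}, (\ref{q}); this in turn produces the canonical form~\eqref{casimir} of $\hat{A}_2$ together with its factorized expression~(\ref{s}). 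The inverse matrices $\psi_\gamma^i$, $\phi^{\beta\gamma}$ and the determinants reduce to straightforward linear algebra.

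Third, I would assemble the auxiliary Hamiltonian objects from the explicit constants $\psi_{km}^\alpha$, $\omega_m^\alpha$ just read off. Formula~(\ref{kazimir}) immediately produces the six nonlocal Casimirs $s^k$ displayed in the statement. Formula~(\ref{pik}) produces the common momentum $P$. For the two flow Hamiltonians $\tilde h_1$ and $\tilde h_2$, I would first check that the velocities $v^i(\mathbf{b}_x)$, $w^i(\mathbf{b}_x)$ from~\eqref{eq:10} satisfy the linearity hypothesis $\psi_m^\gamma v^m = \eta_m^\gamma b_x^m$ of Theorem~\ref{th:aaa}, i.e.\ that the $1/a^1$ denominators in $v^i$, $w^i$ cancel against the linear combinations prescribed by $\psi_m^\gamma$. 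Extracting the constant matrices $\eta_m^\gamma$ for each of the two flows, verifying the constraints~(\ref{very}), and plugging into~(\ref{ham}) then yields the two nonlocal densities in the statement.

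The main obstacle is the compatibility statement $[\hat{A}_1,\hat{A}_2] = 0$. Even with the factorized form~(\ref{s}), the Schouten bracket of a first-order and a third-order homogeneous Dubrovin--Novikov operator in six components is a trilinear expression whose vanishing is not automatic from the individual Hamiltonian properties already established for $\hat{A}_1$ and $\hat{A}_2$. My plan is to work in the Casimir coordinates $a^k$, where $\hat{A}_2$ takes its most compact factorized form and $\hat{A}_1$ is given by~\eqref{eq:32}, and verify the bidifferential identity $[\hat{A}_1,\hat{A}_2] = 0$ using the definition of the Schouten bracket of~\cite{Dorf}; thanks to the factorized form, this amounts to checking a finite set of polynomial identities among the entries of $\psi_i^\gamma$, $\phi_{\beta\gamma}$ and the structural coefficients of $\hat{A}_1$, which is the step where computer algebra~\cite{cde} is most useful. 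A strong indirect confirmation, to be noted in passing, is that the two flows~\eqref{eq:10} are produced by both operators via Magri's recursion on adjacent densities in the constructed hierarchy, so that $\hat{A}_2\hat{A}_1^{-1}$ indeed acts as a genuine recursion operator on a nontrivial pair of Hamiltonians.
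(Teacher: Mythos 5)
Your proposal is correct, and for most of the theorem it follows the same path as the paper: transform $g_{ij}(\mathbf{u})$ to $a$-coordinates via the Vi\`ete formulae and the tensor law, read off the Casimirs, momentum and Hamiltonians from the formulae \eqref{kazimir}, \eqref{pik}, \eqref{ham} after extracting $\eta_1,\eta_2$ from the linearity condition, and delegate $[\hat{A}_1,\hat{A}_2]=0$ to computer algebra (the paper likewise states this is done by CDE and is probably impossible by hand; your Magri-recursion remark is a reasonable sanity check but, as you note, not a substitute). The one place where you genuinely diverge is the verification that $g_{ik}(\mathbf{a})$ is a Hamiltonian (Monge) metric: the paper checks the Potemin conditions \eqref{eq:20} and \eqref{eq:21} directly on the metric by Reduce, and only afterwards obtains the decomposition by solving the linear parallel-vector system \eqref{d}; you instead verify the matrix identity $\mathbf{g}=P\Phi P^T$ together with the skew-symmetry \eqref{c} and the constant cyclic constraints \eqref{seven}, and then invoke Theorem~\ref{th:zzz} (in its $\phi$-extended form) to conclude that the Potemin system holds. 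Your route trades a differential-polynomial verification for a purely algebraic one on constant matrices, which is closer to being checkable by pen and paper; the paper's route has the advantage of not presupposing the decomposition, which it must anyway derive to get the factorized form \eqref{s}. Both are sound, and the remaining steps coincide.
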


\begin{proof}
  It is not difficult (using Reduce) to change the
  coordinates of the metric $g_{ij}(\mathbf{u})$ using the Vi\`{e}te formulae
  \eqref{eq:13}. Then, again by Reduce, it is easy to verify
  that the metric $g_{ij}(\mathbf{a})$ fulfills the condition \eqref{eq:20}
  which ensures the fact that $g_{ij}(\mathbf{a})$ is a Monge metric, and
  fulfills the nonlinear equation~\eqref{eq:21}. This means that the operator
  $\hat{A}_{2}$ defined through \eqref{casimir} is a Hamiltonian operator.

  The metric decomposition (\ref{gen}), (\ref{chet}) follows from the
  integrability of linear system (\ref{d}), which is easily solved by
  Reduce once the linearity of solutions $\psi^i_\gamma$ is
  taken into account.  The nonlocal Casimirs $\mathbf{S}^{k}=\int s^{k}dx$ and
  the momentum $\mathbf{P}=\int Pdx$ can be found by the formulae
  \eqref{kazimir} and \eqref{pik}, respectively.

The computation of Hamiltonians $\mathbf{\tilde{H}}_{k}=\int \tilde{h}_{k}dx$
is slightly more complicated. First of all, one should find constant matrices
$\mathbf{\eta }_{1},\mathbf{\eta }_{2}$ determined by the equations $\psi
_{m}^{\gamma }v^{m}(\mathbf{b}_{x})=\eta _{1m}^{\gamma }b_{x}^{m}$, $\psi
_{m}^{\gamma }w^{m}(\mathbf{b}_{x})=\eta _{2m}^{\gamma }b_{x}^{m}$ (see
\eqref{eq:10}) for both commuting systems $b_{y}^{i}=v^{i}(\mathbf{b}_{x})$ and
$ b_{z}^{i}=w^{i}(\mathbf{b}_{x})$. Here they are
\begin{equation*}
\mathbf{\eta }_{1}=
\begin{pmatrix}
0 & 0 & 0 & 0 & 0 & 0\cr1 & 0 & 0 & 0 & 0 & 0\cr0 & 0 & 0 & 0 & 0 & 0\cr0 & 0
& 0 & 1 & 0 & 0\cr0 & 0 & -2 & 0 & 0 & 1\cr0 & -1 & 0 & 0 & 0 & 0%
\end{pmatrix}
,\quad \mathbf{\eta }_{2}=
\begin{pmatrix}
0 & 0 & 0 & 0 & 0 & 0\cr0 & 0 & 0 & 0 & 0 & 0\cr1 & 0 & 0 & 0 & 0 & 0\cr0 & 0
& 0 & 0 & 0 & 0\cr0 & 0 & 0 & 1 & 0 & 0\cr0 & 0 & -1 & 0 & 0 & 1%
\end{pmatrix}
.
\end{equation*}
The Hamiltonian densities can then be found by the formulae
  \eqref{ham}.

  Finally, it is easy by CDE (but probably impossible by pen and paper) to
  prove that $[\hat{A}_1,\hat{A}_2]=0$, ie the first and third order operators
  commute with respect to the Schouten bracket. This endows both the WDVV
  systems \eqref{eq:10} by the structure of a bi-Hamiltonian system.
\end{proof}

\begin{remark}
Using CDE \cite{cde} it is not difficult to prove that $\hat{A}_{2}(%
\mathbf{p})$ fulfills the criterion of \cite{KeKrVe-JGP-2004} for the left
system \eqref{eq:10}. This means that $\hat{A}_{2}$ maps conservation laws
to symmetries for that system, and analogously for the right system.
\end{remark}

\begin{remark}\label{sec:second-hamilt-struct-1}
  The Monge metric $g_{ij}(\mathbf{a})$ is not flat: its Riemann curvature
  tensor does not vanish. Moreover, its scalar curvature (Ricci
    scalar) is zero\footnote{We are grateful to O. Mokhov for
        communicating us this fact.}  but its sectional curvature is non
    constant. The Monge metric is also not projectively flat: its Weyl tensor
  is nonzero. The known examples in $N=3$ have flat metrics: this is a first
  example of a Dubrovin--Novikov type third-order Hamiltonian operator with a
  non-trivial curvature.
\end{remark}

\section{Conclusion and Outlook}

It is worth to make a short discussion in order to establish to what extent our
results can be checked by a reader.

While the Monge property of $g_{ij}$ is easily checkable by pen and paper, the
hand-verification of the nonlinear condition~\eqref{eq:21} looks like it would
take much longer.

In principle it is possible to find nonlocal Casimirs, the momentum and the
Hamiltonian by pen and paper, using our formulae; of course it is faster to use
computer. However, we stress that the decomposition of the Monge metric and the
fact that $\tilde{h}_1$ and $\tilde{h}_2$ are Hamiltonians for the respective
systems~\eqref{eq:10} can be easily checked by pen and paper, thus validating
our long computer calculations.

The only computation that looks almost impossible to complete by pen and paper
is $[\hat{A}_1,\hat{A}_2]=0$. In this respect CDE proved to yield the
correct result for the Schouten bracket of a fairly high number of test cases
(see \cite{fpv,cde}),
even in the multidimensional situation\footnote{We thank M. Casati for
  providing us examples of commuting and non-commuting bivectors in dimension
  $2+1$.}.

\medskip

Now, let us discuss what are the perspectives of our research work.

Non-diagonalizable hydrodynamic type systems \eqref{eq:5} which possess
conservation law densities quadratic in first derivatives (see (\ref{h}))
were investigated in \cite{fs}. Following the authors of that paper we can
say that in general a non-diagonalizable hydrodynamic type system
\eqref{eq:5} is integrable if it possesses `sufficiently' many conservation
law densities which are quadratic in first derivatives. It was proved that
in the three component case `sufficiently' many means $2$; here, in the four
component case, this number is $3$. In the $n$-component case this question
is open due to its high computational complexity. This problem still is not
yet solved. However, we believe that our Conjecture is valid.

\textbf{Conjecture}: \textit{If a non-diagonalizable hydrodynamic type system
} \eqref{eq:5} \textit{possesses at least one conservation law density which
is quadratic in first derivatives and also has a Dubrovin--Novikov type
first order Hamiltonian structure, then such a system is integrable by the
inverse scattering transform method (this means automatically that such a
system has `sufficiently' enough conservation law densities which are
quadratic in first derivatives). Moreover then this system has a
Dubrovin--Novikov type third order Hamiltonian structure.}

We believe that a consistent subclass of multi-dimensional WDVV associativity
equations written as a family of commuting hydrodynamic type systems are
bi-Hamiltonian in the above sense. A striking feature of all bi-Hamiltonian
examples of WDVV systems is that their third order Hamiltonian operators are
uniquely determined by a quadratic line complex \cite{Dolgachev,fpv}. It would
be interesting to know if this is connected by any means to the rich underlying
geometry (and in particular projective geometry) which is connected to
Gromov--Witten invariants.

In the aforementioned paper \cite{fs}, the authors considered the Matrix
Hopf equation 
\begin{equation*}
U_{t}=(U^{2})_{x},
\end{equation*}
where $U$ is a symmetric matrix of order $N\times N$, with the further
reduction tr$U^{k}=$const, $k=1,2,...,N$. This matrix equation represents a
non-diagonalizable hydrodynamic type system of order $n\times n$ where
$n=N(N-1)/2$. Such a system possesses $N(N+1)/2-1$ hydrodynamic conservation
laws and precisely $N-1$ conservation laws which are quadratic in first
derivatives. This matrix Hopf equation is equivalent to the remarkable $N$ wave
system (see detail in \cite{Ferap3x3}).

Our hypothesis is that the first commuting flows
\begin{equation*}
U_{t^{k}}=(U^{k})_{x},\text{ \ }k=2,3,...,N-1
\end{equation*}%
up to an appropriate reciprocal transformations are equivalent to
corresponding multi-dimensional WDVV associativity equations. This problem
should be investigated elsewhere.

The case considered in this paper is determined by the particular choice $N=4
$. Indeed, our two six-component commuting non-diagonalizable hydrodynamic type
systems have nine hydrodynamic conservation laws and three conservation laws
which are quadratic in first derivatives. They are connected with the pair of
matrix Hopf equations $U_{y}=(U^{2})_{x}$, $ U_{z}=(U^{3})_{x}$ (with four
constraints tr$U^{k}$=const, $k=1,2,3,4$) by an appropriate reciprocal
transformation.

Finally, we remark that the triviality of the Hamiltonian cohomology of
$\hat{A}_{1}$ \cite{getz} implies that $\hat{A}_{2}=L_{\tau }\hat{A}_{1}$.  The
vector field $\tau $ is not uniquely determined; the \emph{mastersymmetry} of
the WDVV associativity equations should be found as one of the possible choices
for $\tau$ \cite{Dorf}. The explicit expression for all possible generalized
vector field $\tau $ of the above type can be of interest, and is computable
via the technique of Lagrangian representation \cite{Yavuz}. We solved this
problem for the $3$-component WDVV \cite{v}; expressions for the $6$-component
WDVV are available upon request.

\section*{Acknowledgements}

We thank M. Casati, B.A. Dubrovin, E.V. Ferapontov, M. Marvan,
O.I. Mokhov and G.V. Potemin for useful discussions. RFV would also
like to thank A.C. Norman for his support with the computer algebra system
Reduce and the system administrator of the server \texttt{sophus} A. Falconieri.

We acknowledge the financial support from GNFM of the Istituto Nazionale
di Alta Matematica, the Istituto Nazionale di Fisica Nucleare, and the
Dipartimento di Matematica e Fisica \textquotedblleft E. De
Giorgi\textquotedblright\ of the Universit\`{a} del Salento. MVP's work was
partially supported by the grant of Presidium of RAS \textquotedblleft
Fundamental Problems of Nonlinear Dynamics\textquotedblright\ and by the RFBR
grant 14-01-00012.

We thank the anonymous Referee for many comments that helped us to improve the
exposition of our results.

\end{document}